\documentclass[sigconf,nonacm]{acmart}

\usepackage{amsmath}
\usepackage{xcolor}
\usepackage{balance}
\usepackage{float} 

\usepackage{amssymb}
\usepackage{bm}
\usepackage{nicefrac}
\usepackage{booktabs}
\usepackage{array}
\usepackage{multirow}
\usepackage{threeparttable}
\usepackage{makecell}
\usepackage[procnumbered,ruled,vlined,linesnumbered]{algorithm2e}
\usepackage{stfloats}
\usepackage{graphicx}
\usepackage{subfigure}
\usepackage{hyperref}
\usepackage{enumerate}
\def\norm#1{\| #1 \|}

\newtheorem{problem}{Problem}
\newtheorem{theorem}{Theorem}[section]

\newtheorem{lemma}[theorem]{Lemma}

\newtheorem{definition}[theorem]{Definition}

\newcommand{\bb}{\bm{b}}    

\newenvironment{fminipage}%
{\begin{Sbox}\begin{minipage}}%
		{\end{minipage}\end{Sbox}\fbox{\TheSbox}}

\def\abs#1{\left|#1  \right|}

\def\trace#1{\mathrm{Tr} \left(#1 \right)}

\newfont{\nset}{msbm10}

\def\kh#1{\left( #1 \right)}

\newcommand{\removelatexerror}{\let\@latex@error\@gobble}

\newcommand\LL{\bm{\mathit{L}}}

\def\trace#1{\mathrm{Tr} \left(#1 \right)}

\def\aa{\pmb{\mathit{a}}}
\newcommand{\one}{\mathbf{1}}

\newcommand\XX{\boldsymbol{\mathit{X}}}
\newcommand\yy{\boldsymbol{\mathit{y}}}
\newcommand\zz{\boldsymbol{\mathit{z}}}
\newcommand\xx{\boldsymbol{\mathit{x}}}
\newcommand\aaa{\boldsymbol{\mathit{a}}}

\newcommand\ee{\boldsymbol{\mathit{e}}}

\newcommand\vvv{\boldsymbol{\mathit{v}}}
\newcommand\hh{\boldsymbol{\mathit{h}}}

\renewcommand\AA{\boldsymbol{\mathit{A}}}
\newcommand\BB{\boldsymbol{\mathit{B}}}

\newcommand\JJ{\boldsymbol{\mathit{J}}}
\newcommand\DD{\boldsymbol{\mathit{D}}}

\newcommand\MM{\boldsymbol{\mathit{M}}}

\newcommand\ZZ{\boldsymbol{\mathit{Z}}}

\newcommand\QQ{\boldsymbol{\mathit{Q}}}

\newcommand\II{\boldsymbol{\mathit{I}}}
\newcommand\uu{\boldsymbol{\mathit{u}}}

\newcommand{\SDDMSolver}{\textsc{Solve}}
\newcommand\ZZtil{\widetilde{\boldsymbol{\mathit{Z}}}}
\newcommand\zztil{\widetilde{\boldsymbol{\mathit{z}}}}

\DeclareMathOperator*{\argmin}{arg\,min}

\DontPrintSemicolon
\SetKw{KwAnd}{and}
\SetFuncSty{textsc}
\SetKwInOut{Input}{Input\ \ \ \ }
\SetKwInOut{Output}{Output}

\usepackage{tabularx}
\usepackage{stfloats}

\begin{document}

\title{Minimizing Total Biharmonic Distance in Large Graphs via Link Recommendation}

\author{Xinna Zhou}
\affiliation{%
  \department{College of Computer Science and Artificial Intelligence}
  \institution{Fudan University}
  \city{Shanghai}
  \country{China}
  \postcode{200433}
}
\email{23210240097@m.fudan.edu.cn}

\author{Zhongzhi Zhang}
\authornote{Corresponding author.}
\affiliation{
  \department{College of Computer Science and Artificial Intelligence}
 \institution{Fudan University}
 \city{Shanghai}
 \country{China}
}
\email{zhangzz@fudan.edu.cn}

\begin{abstract}
The total biharmonic distance, which is the sum of the biharmonic distance between every pair of nodes in a network, is a key metric for evaluating network connectivity and robustness. In this paper, we study the problem of minimizing the total biharmonic distance by adding $k$ nonexistent edges for a given graph $G$ and budget $k$. The problem is computationally challenging. We show that the objective function of the problem is monotone but not supermodular. To solve this problem, we propose simple greedy algorithms with cubic time complexity. To mitigate the high time complexity of these greedy algorithms, we apply several techniques, including the projection method, the Laplacian solver, and convex hull approximation. These techniques reduce the time complexity of our proposed algorithms from cubic to nearly linear while providing error guarantees. Finally, extensive experiments on real datasets demonstrate both the efficiency and effectiveness of our proposed algorithms.

\vspace{4pt}
This paper has been published in Proceedings of the 32nd ACM SIGKDD Conference on Knowledge Discovery and Data Mining V.1 (KDD '26).
DOI: \url{https://doi.org/10.1145/3770854.3780204}.
\end{abstract}

\ccsdesc[500]{Theory of computation~Graph algorithms analysis}

\keywords{Combinatorial optimization; biharmonic distance; resistance distance; connectivity; link addition; graph algorithm}

\maketitle

\section{Introduction}
Graph distance metrics play a crucial role in the realm of network analysis. Notable distance metrics include geodesic distance~\cite{newman2018networks} and resistance distance~\cite{KlRa93}. Resistance distance originates from electrical network theory and has been widely utilized in clustering~\cite{SheMalik2000SpecClust}, graph learning~\cite{kreuzer2021rethinking,black2023understanding,ZhuEtal2003SemiSupHarmonic}, and collaborative recommendation systems~\cite{FoPiReSa07, KuSc07}. Additionally, it has been applied to ranking problems in large graphs~\cite{SaMoPr08}, detecting anomalous changes in graph structure over time~\cite{SrDa14}, assessing node significance~\cite{lu2024resistance}, and enhancing the efficiency of graph systems~\cite{QiDhTaPeWa21}. Over the past few decades, considerable efforts have been dedicated to further understanding and exploring the properties of resistance distance~\cite{DoBu13, DoSibu18,ThYaNa19}.

Resistance distance has several variants~\cite{verma2017hunt}, among which biharmonic distance~\cite{lipman2010biharmonic} stands out. It was introduced to capture more nuanced local and global structural information, offering advantages over resistance distance in many applications, such as maintaining graph connectivity and identifying critical cut edges~\cite{blackbiharmonic}. Biharmonic distance has been applied across various domains, including graph clustering~\cite{blackbiharmonic}, graph embedding techniques~\cite{kreuzer2021rethinking,black2023understanding}, graph matching~\cite{fan2020spectral}, network coherence~\cite{Yi2022BiharmonicDP,zhang2020fast}, leader selection in noisy networks~\cite{bd14}, identifying crucial edges~\cite{YiShLiZh18}, and other network science problems~\cite{bamieh2012coherence,tyloo2019key}.

Beyond individual metrics, the aggregate of a metric in a graph is often used to evaluate the overall network. For instance, the Kirchhoff index sums the resistance distances between all node pairs~\cite{KlRa93,GhBoSa08}. Similarly, the total biharmonic distance, or biharmonic index~\cite{Yi2022BiharmonicDP}, is the sum of biharmonic distances across all node pairs in the graph. These aggregate metrics serve as powerful tools for evaluating various properties of a network and have attracted significant attention. The Kirchhoff index is key for measuring network connectedness~\cite{TiLe10} and evaluating the robustness of consensus algorithms~\cite{QiZhYiLi19}. The total biharmonic distance can be expressed by the Kirchhoff index and the resistance distances between some node pairs~\cite{Tyloo2017RobustnessOS}. {
The total biharmonic distance has many application scenarios. It can effectively evaluate the robustness of second-order consensus algorithms in distributed network systems~\cite{Tyloo2017RobustnessOS,XuWuZhZhKaCh22} and identify local vulnerabilities~\cite{tyloo2019key}. For example, in power grids and complex oscillator networks, total biharmonic distance can measure the fault tolerance of the system~\cite{tyloo2019key}. In communication networks, when lines are cut and information must be redirected, total biharmonic distance can evaluate a network’s vulnerability to changes~\cite{Tyloo2017RobustnessOS}. In addition, it is an important tool for evaluating network connectivity~\cite{blackbiharmonic} and can also be used to define edge importance metrics~\cite{siami2017growing}.}

Networks with smaller total biharmonic distance are known to exhibit enhanced robustness and stronger connectivity, leading to improved overall system performance~\cite{bamieh2012coherence,Yi2022BiharmonicDP}. Therefore, optimizing total biharmonic distance through different graph operations becomes essential. Motivated by practical scenarios such as link recommendation in online social networks, we study the problem of optimizing (minimizing) total biharmonic distance by adding nonexistent edges (links) to a graph. More precisely, we aim to devise effective and efficient algorithms for the following optimization problem: For a given connected undirected unweighted graph $G=(V,E)$ with $n$ nodes and $m$ edges, a small positive integer $k$, how to add $k$ nonexistent edges from a candidate edge set $Q=(V \times V) \backslash E$ to graph $G$, so that the total biharmonic distance for the resulting graph is minimized.

The main contributions of this paper are summarized below.

$\bullet$ We show that the total biharmonic distance is monotonically decreasing but not supermodular. We then propose a simple greedy algorithm with cubic time complexity, providing an approximation guarantee based on the curvature and submodularity ratio of the objective function. We also present a gradient-based greedy algorithm as a novel approach to address this problem.

$\bullet$ We employ several techniques, including random projection, the Laplacian solver, and convex hull approximation, resulting in a scalable and efficient algorithm \textsc{ApproxFast}. This algorithm is gradient-based and achieves a time complexity of \(\tilde{O}(k(nl+m)/\epsilon^2)\), where $\tilde{O}(\cdot)$ hides poly$(\log (n))$ factors, $\epsilon > 0$ is the error parameter that balances performance and efficiency, and $l$ represents the number of nodes on the boundary of the approximate convex hull, which is typically small in most real-world networks.

$\bullet$ We evaluate our algorithms by performing experiments on many real-world networks of varying scales, showing that our proposed fast algorithm \textsc{ApproxFast} achieves both good efficiency and effectiveness and is scalable to large networks with over four million nodes.

\section{Preliminaries}
\subsection{Notations}
We use $\mathbb{R}$ to denote real number field. We utilize bold lowercase (e.g., $\xx$) for vectors and uppercase (e.g., $\MM$) for matrices. Elements are indicated by subscripts, e.g., $\xx_i$ or $\MM_{i,j}$. For a matrix $\MM$, $\MM[i,:]$ and $\MM[:,j]$ represent the $i$-th row and $j$-th column, respectively. Let $\ee_i$ denote the $i$-th standard basis vector (1 at position $i$, 0 elsewhere). We use $\mathbf{1}$ (resp. $\JJ$) to denote the all-ones (column) vector (resp. matrix). We write $\xx^\top$ and $\MM^\top$to represent the transposes of vector $\xx$ and matrix $\MM$, respectively. For any vector $\aaa$, we use $\norm{\aaa}=\sqrt{\sum\nolimits_i \aaa_i^2}$ to denote the $\ell_2$ norm of $\aaa$, and use $\norm{\aaa}_{\XX} = \sqrt{\aaa^\top \XX \aaa}$ to denote the matrix norm of $\aaa$ for matrix $\XX$. For two non-negative scalars $a$ and $b$, we use $a \approx_{\epsilon} b$ to denote that $a$ is an $\epsilon$-approximation of $b$ obeying relation $(1-\epsilon) b \leq a \leq(1+\epsilon) b$. For two matrices $\boldsymbol{A}\mathrm{~and~}\boldsymbol{B}$, we write $\boldsymbol{A}\preceq\boldsymbol{B}$ to denote that $\boldsymbol{B}-\boldsymbol{A}$ is positive semi-definite, i.e., $\boldsymbol{x}^T\boldsymbol{A}\boldsymbol{x}\leq\boldsymbol{x}^T\boldsymbol{B}\boldsymbol{x}$ holds for every vector $\boldsymbol{x}.$

\subsection{Graph and Related Matrices}
Let \( G = (V, E) \) be a connected undirected unweighted graph with \( V = \{1,2,\dots,n\} \) and \( E = \{e_1, e_2, \dots, e_m\} \). The adjacency matrix \( \AA \in \{0,1\}^{n \times n} \) satisfies \( \AA_{i,j} = 1 \) if nodes \( i \) and \( j \) are adjacent, and \( 0 \) otherwise. The degree matrix is \( \DD = \text{diag}(d_1, \dots, d_n) \), where \( d_i = \sum_{j=1}^n \AA_{i,j} \), and the Laplacian matrix is \( \LL = \DD - \AA \).  If we fix an arbitrary orientation for all edges in $G$, then we define the signed edge-vertex incidence matrix $\BB \in \mathbb{R}^{m \times n}$, whose entries are: $\BB_{e,u}=1$ if node $u$ is the head of edge $e$, $\BB_{e,u}=-1$ if node $u$ is the tail of edge $e$, and $\BB_{e,u}=0$ otherwise. $\LL$ can also be represented by $\LL=\BB^T\BB$. For any pair of distinct nodes $u,v\in V$, we define $\bb_{uv} = \ee_{u}-\ee_{v}$. If an edge \( e \) connects \( u \) and \( v \) and we fix an arbitrary orientation such that \( e \) is directed from \( u \) to \( v \), we define $\bb_{e}=\ee_{u}-\ee_{v}$.  Since \( \LL \) is symmetric, it has eigenvalues \( 0 = \lambda_1(\LL) < \lambda_2(\LL) \leq \cdots \leq \lambda_n(\LL) \) with corresponding orthonormal eigenvectors \( \uu_1,\uu_2, \dots, \uu_n \), and can be written as \( \LL = \sum_{i=2}^{n} \lambda_i \uu_i \uu_i^T \). Its Moore-Penrose pseudoinverse is \( \LL^\dag = (\LL + \frac{1}{n} \JJ)^{-1} - \frac{1}{n} \JJ \). We have \( \frac{1}{2n^4} \LL \LL^\dag \preceq \LL \preceq n \II \) and \( \frac{1}{n} \LL \LL^\dag \preceq \LL^\dag \preceq 2n^4 \II \)~\cite{Ch97}. The eigenvalues and the corresponding eigenvectors of $\LL^2$ are $\lambda_i^2$ and $\boldsymbol{u}_{i},i=1,2,\ldots,n.$ Thus, $\LL^2$ has the following spectral decomposition: $\boldsymbol{\LL}^{2}= \sum\nolimits_{i= 2}^{n}\lambda _{i}^{2}(\LL)\boldsymbol{u}_{i}\boldsymbol{u}_{i}^{\top }$, which indicates that $\LL^2$ is positive semi-definite with only one zero eigenvalue. Its Moore-Penrose pseudo-inverse $\LL^{2\dag}$ is positive semi-definite with only one zero eigenvalue. Similarly, $\LL^3$ and $\LL^{3\dag}$ are positive semi-definite with only one zero eigenvalue. For any edge set \( T \subseteq (V \times V) \setminus E \), let \( G_T = (V, E \cup T) \) be the graph after adding edges in \( T \). 

\subsection{ Biharmonic Distances and Related Metrics}

\begin{definition}[\cite{YiShLiZh18}]\label{def:biharmonic}  For a graph $G=(V, E)$, the biharmonic distance $\bb(u, v)$ between any pair of distinct nodes $u,v$ is defined by
    \begin{align*}
        \bb(u,v)=\sqrt{\bb_{uv}^{\top}\LL^{2\dag}\bb_{uv}}=\|\LL^{\dag}\bb_{uv}\| = \sqrt{\LL_{u,u}^{2\dag} + \LL_{v,v}^{2\dag} - 2\LL_{u,v}^{2\dag}}.
    \end{align*}
\end{definition}
We denote the square of the biharmonic distance as \(\bb^2(u,v)\). For brevity, we will sometimes refer to \(\bb^2(u,v)\) as the biharmonic distance, when the context makes it clear that we mean its square. Based on the biharmonic distance, we can derive the following graph index.

\begin{definition}[\cite{YiYaZhPa18}]\label{def:total biharmonic distance}
    For a graph $G=(V, E)$, the total biharmonic distance $B(G)$ is defined by 
    \begin{equation*}
        B(G)= \frac12\sum\nolimits_{u\in V}\sum\nolimits_{\nu\in V}\bb^2(u,v)=n\sum\nolimits_{i=2}^n\frac1{\lambda_i^2}=n\trace{\boldsymbol{\LL^{2\dagger}}}.
    \end{equation*}
\end{definition}

The biharmonic distance can be further generalized to higher-order variants~\cite{siami2017growing} by considering arbitrary powers of $\LL^{\dagger}$.  
\begin{definition}[\cite{blackbiharmonic}]
    For a graph $G=(V, E)$, the $k$-harmonic distance $\hh_k(u, v)$ between any pair of distinct nodes $u,v$ is defined by $$\hh_k(u,v)=\sqrt{\bb_{uv}^{\top}\LL^{k\dag}\bb_{uv}}.$$
\end{definition}
Similarly, we will refer to \(\hh_k^2(u,v)\) as the \(k\)-harmonic distance, rather than the square of the \(k\)-harmonic distance, when the context is clear. Notably, when $k=1$, $\hh_k^2(u,v)=R_{uv}$ which is the resistance distance between nodes $u$ and $v$. When $k=2$, $\hh_k(u,v)=\bb(u,v)$ which is the biharmonic distance between nodes $u$ and $v$. We also introduce the total $k$-harmonic distance.
\begin{definition}[\cite{tyloo2019key}]
    For a graph $G=(V, E)$, the total $k$-harmonic distance $K(G)$ is defined by 
    \begin{equation*}
        K(G)= \frac12\sum\nolimits_{u\in V}\sum\nolimits_{\nu\in V}\hh_k^2(u,v)=n\sum\nolimits_{i=2}^n\frac1{\lambda_i^k}=n\trace{\boldsymbol{\LL^{k\dagger}}}.
    \end{equation*}
\end{definition}

\section{Problem Formulation}
In this section, we begin by examining the monotonicity of the total biharmonic distance. Next, we present the problem statement. We then analyze the properties of the objective function before introducing two simple greedy algorithms.

\subsection{Monotonicity of Total Biharmonic Distance}

The monotonically decreasing property of the total biharmonic distance was shown in \cite{wei2021biharmonic}. However, the method in \cite{wei2021biharmonic} does not give the marginal decrease, which is important for our algorithm. We take a different approach by first deriving the marginal decrease in total biharmonic distance when an edge is added.

\begin{lemma}\label{marginal} For a connected graph $G=(V,E)$, the marginal decrease of the total biharmonic distance after adding an edge $e$ is$$\Delta(e)=B(G)-B(G_{\{e\}})=-n\frac{(\boldsymbol{b}_e^\top\boldsymbol{L}^{2\dagger}\boldsymbol{b}_e)^2}{(1+\boldsymbol{b}_e^\top\boldsymbol{L}^\dagger\boldsymbol{b}_e)^2}+n\frac{2 \boldsymbol{b}_e^\top\boldsymbol{L}^{3\dagger}\boldsymbol{b}_e}{1+\boldsymbol{b}_e^\top\boldsymbol{L}^\dagger\boldsymbol{b}_e}.$$
\end{lemma}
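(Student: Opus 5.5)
Since $B(G)=n\,\mathrm{Tr}(\LL^{2\dagger})$ by the definition of total biharmonic distance, the plan is to write the pseudoinverse of the updated Laplacian $\LL' = \LL + \bb_e\bb_e^\top$ explicitly and compute the trace of its square. Because $G$ is connected, $\bb_e$ is orthogonal to $\one$ and lies in the range of $\LL$. Both $\LL$ and $\LL'$ therefore have the same null space $\mathrm{span}(\one)$. This lets me use the Sherman--Morrison formula on the subspace $\one^\perp$, which gives
\begin{equation*}
\LL'^{\dagger} = \LL^{\dagger} - \frac{\LL^{\dagger}\bb_e\bb_e^\top\LL^{\dagger}}{1+\bb_e^\top\LL^{\dagger}\bb_e}.
\end{equation*}
To justify this, I will check that the right-hand side times $\LL'$ equals the projector $\II - \frac1n\JJ$. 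An alternative is to apply Sherman--Morrison to $\LL+\frac1n\JJ$ and use $\LL^\dagger = (\LL+\frac1n\JJ)^{-1}-\frac1n\JJ$ from the preliminaries. Also, $(\LL'^{\dagger})^2 = (\LL'^2)^{\dagger}$, since both are functions of the same symmetric matrix with the same null space, so $B(G_{\{e\}}) = n\,\mathrm{Tr}\big((\LL'^{\dagger})^2\big)$.

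Next I square the update. Write $\xx = \LL^\dagger \bb_e$ and $c = 1+\bb_e^\top\LL^\dagger\bb_e$. Then
\begin{equation*}
(\LL'^{\dagger})^2 = \LL^{2\dagger} - \frac{\LL^{\dagger}\xx\xx^\top + \xx\xx^\top\LL^{\dagger}}{c} + \frac{\xx\xx^\top\xx\xx^\top}{c^2}.
\end{equation*}
I take traces using cyclicity and the identities $\LL^{\dagger k} = \LL^{k\dagger}$:
\begin{align*}
\mathrm{Tr}(\LL^\dagger\xx\xx^\top) &= \xx^\top\LL^\dagger\xx = \bb_e^\top\LL^{3\dagger}\bb_e,\\
\mathrm{Tr}(\xx\xx^\top\xx\xx^\top) &= (\xx^\top\xx)^2 = (\bb_e^\top\LL^{2\dagger}\bb_e)^2.
\end{align*}
Multiplying by $n$ then gives
\begin{equation*}
B(G)-B(G_{\{e\}}) = n\frac{2\,\bb_e^\top\LL^{3\dagger}\bb_e}{c} - n\frac{(\bb_e^\top\LL^{2\dagger}\bb_e)^2}{c^2},
\end{equation*}
which is exactly the expression in Lemma~\ref{marginal}.

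The only delicate step is justifying the pseudoinverse rank-one update and the identity $(\LL'^\dagger)^2 = \LL'^{2\dagger}$. Both rest on the null space being preserved, which in turn follows from connectivity and $\bb_e\perp\one$. I would handle this by restricting everything to $\one^\perp$, where all the matrices involved are invertible, and the rest is routine algebra.
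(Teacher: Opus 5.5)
Your proposal is correct and follows the paper's proof step for step: the rank-one pseudoinverse (Sherman--Morrison) update for $\LL+\bb_e\bb_e^\top$, squaring it, and taking traces by cyclicity to get the $\bb_e^\top\LL^{3\dagger}\bb_e$ and $(\bb_e^\top\LL^{2\dagger}\bb_e)^2$ terms. The only difference is that you justify the pseudoinverse update and the identity $\bigl((\LL+\bb_e\bb_e^\top)^{\dagger}\bigr)^2=(\LL+\bb_e\bb_e^\top)^{2\dagger}$ through the preserved null space $\mathrm{span}(\one)$, which the paper leaves implicit by citing Sherman--Morrison.
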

\begin{proof}
By Definition~\ref{def:total biharmonic distance},  $\Delta(e)=n\trace{\boldsymbol{L^{2\dagger}}}-n\trace{(\boldsymbol{L}+\boldsymbol{b}_e\boldsymbol{b}_e^\top)^{2\dagger}}.$ Using the Sherman-Morrison formula~\cite{Me73}, we have 
\begin{equation}\label{morrison}
    (\LL + \bb_e \bb_e^T)^{\dag} = \LL^{\dag} - \frac{\LL^{\dag} \bb_e \bb_e^T \LL^{\dag}}{1 + \bb_e^T \LL^{\dag} \bb_e}.
\end{equation}
Squaring both sides, we obtain
\begin{equation}\label{Lsquare}
\begin{aligned}
&(\LL + \bb_e \bb_e^T)^{2\dag}= \left(\LL^{\dag} - \frac{\LL^{\dag} \bb_e \bb_e^T \LL^{\dag}}{1 + \bb_e^T \LL^{\dag} \bb_e}\right)^2\\
&=\LL^{2\dag} -  \frac{\LL^{2\dag} \bb_e \bb_e^T \LL^{\dag}}{1 + \bb_e^T \LL^{\dag} \bb_e}- 
 \frac{\LL^{\dag} \bb_e \bb_e^T \LL^{2\dag}}{1 + \bb_e^T \LL^{\dag} \bb_e} + \frac{(\LL^{\dag} \bb_e \bb_e^T \LL^{\dag})^2}{(1 + \bb_e^T \LL^{\dag} \bb_e)^2}. 
\end{aligned}
\end{equation}
Thus, we have
\begin{equation*}
\begin{aligned}
\trace{(\boldsymbol{L}+\boldsymbol{b}_e\boldsymbol{b}_e^\top)^{2\dagger}}&=\trace{\boldsymbol{L^{2\dagger}}}-\frac{2 \trace{\boldsymbol{L^{2\dagger}}\boldsymbol{b}_e\boldsymbol{b}_e^\top \boldsymbol{L}^\dagger} }{1+\boldsymbol{b}_e^\top\boldsymbol{L}^\dagger\boldsymbol{b}_e}\\&+\frac{\trace{\boldsymbol{L}^\dagger\boldsymbol{b}_e\boldsymbol{b}_e^\top \boldsymbol{L^{2\dagger}} \boldsymbol{b}_e\boldsymbol{b}_e^\top \boldsymbol{L}^\dagger}}{(1+\boldsymbol{b}_e^\top\boldsymbol{L}^\dagger\boldsymbol{b}_e)^2}.
\end{aligned}
\end{equation*}
Therefore, one can conclude that
\begin{equation}\label{difference}
\begin{aligned}
\Delta(e) &= -n\frac{\trace{\boldsymbol{L}^\dagger\boldsymbol{b}_e\boldsymbol{b}_e^\top \boldsymbol{L^{2\dagger}} \boldsymbol{b}_e\boldsymbol{b}_e^\top \boldsymbol{L}^\dagger}}{(1+\boldsymbol{b}_e^\top\boldsymbol{L}^\dagger\boldsymbol{b}_e)^2}+
n\frac{2 \trace{\boldsymbol{L^{2\dagger}}\boldsymbol{b}_e\boldsymbol{b}_e^\top \boldsymbol{L}^\dagger} }{1+\boldsymbol{b}_e^\top\boldsymbol{L}^\dagger\boldsymbol{b}_e}\\
&=-n\frac{\boldsymbol{b}_e^\top\boldsymbol{L}^{2\dagger}\boldsymbol{b}_e \trace{\boldsymbol{L}^{\dagger}\boldsymbol{b}_e \boldsymbol{b}_e^\top \boldsymbol{L}^{\dagger}}}{(1+\boldsymbol{b}_e^\top\boldsymbol{L}^\dagger\boldsymbol{b}_e)^2}+n\frac{2 \trace{\boldsymbol{b}_e^\top\boldsymbol{L}^{3\dagger}\boldsymbol{b}_e }}{1+\boldsymbol{b}_e^\top\boldsymbol{L}^\dagger\boldsymbol{b}_e}\\
&=-n\frac{(\boldsymbol{b}_e^\top\boldsymbol{L}^{2\dagger}\boldsymbol{b}_e)^2}{(1+\boldsymbol{b}_e^\top\boldsymbol{L}^\dagger\boldsymbol{b}_e)^2}+n\frac{2 \boldsymbol{b}_e^\top\boldsymbol{L}^{3\dagger}\boldsymbol{b}_e}{1+\boldsymbol{b}_e^\top\boldsymbol{L}^\dagger\boldsymbol{b}_e}.\\
\end{aligned}
\end{equation}
This finishes the proof.
\end{proof}

Building on Lemma~\ref{marginal}, we now prove the monotonicity of total biharmonic distance in Lemma~\ref{decreasing}.
\begin{lemma}\label{decreasing}
    Let $G=(V,E)$ be a connected graph with $n$ nodes and $e\in (V\times V) \backslash E$ . Then
$B(G_{\{e\}})<B(G)$, that is, $B(G)$ is monotonically decreasing. 
\end{lemma}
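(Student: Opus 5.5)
Using Lemma~\ref{marginal}, it suffices to show $\Delta(e)>0$. Write the relevant quantities in the eigenbasis of $\LL$: set $\alpha_i=(\uu_i^\top \bb_e)^2\ge 0$ for $i=2,\dots,n$. Since $\bb_e\perp \one$ and $\bb_e\neq 0$, we have $\sum_{i\ge2}\alpha_i=\norm{\bb_e}^2=2$, so at least one $\alpha_i>0$. Then
\[
r:=\bb_e^\top\LL^\dagger\bb_e=\sum_{i\ge 2}\frac{\alpha_i}{\lambda_i},\quad
s:=\bb_e^\top\LL^{2\dagger}\bb_e=\sum_{i\ge 2}\frac{\alpha_i}{\lambda_i^2},\quad
t:=\bb_e^\top\LL^{3\dagger}\bb_e=\sum_{i\ge 2}\frac{\alpha_i}{\lambda_i^3},
\]
all strictly positive. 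The claim $\Delta(e)>0$ is equivalent to
\[
2t(1+r)>s^2 .
\]

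The key step is a Cauchy--Schwarz inequality on these weighted sums. Write $\alpha_i/\lambda_i^2=\sqrt{\alpha_i/\lambda_i}\cdot\sqrt{\alpha_i/\lambda_i^3}$. Then
\[
s^2=\Big(\sum_i \frac{\alpha_i}{\lambda_i^2}\Big)^2\le \Big(\sum_i\frac{\alpha_i}{\lambda_i}\Big)\Big(\sum_i\frac{\alpha_i}{\lambda_i^3}\Big)=rt .
\]
Equivalently, $s^2=(\bb_e^\top\LL^{\dagger/2}\cdot\LL^{3\dagger/2}\bb_e)^2\le (\bb_e^\top\LL^\dagger\bb_e)(\bb_e^\top\LL^{3\dagger}\bb_e)$.

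Combining, we get $s^2\le rt<2t+2rt=2t(1+r)$, where the strict inequality uses $t>0$ and $r\ge0$. This gives $\Delta(e)>0$, i.e. $B(G_{\{e\}})<B(G)$.

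One subtlety must be confirmed: that $(\LL+\bb_e\bb_e^\top)^{\dagger}$ is given by the Sherman--Morrison form used in Lemma~\ref{marginal}, which requires $\bb_e$ to lie in the range of $\LL$. This holds because $G$ is connected, so the kernel of $\LL$ is spanned by $\one$ and $\bb_e\perp\one$. Since Lemma~\ref{marginal} is already established, I will simply invoke it.

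The only real obstacle is spotting the Cauchy--Schwarz comparison between $s^2$ and $rt$. After that, the result follows with a factor-of-two margin.
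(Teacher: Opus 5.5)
Your proof is correct and follows essentially the same route as the paper: both use Lemma~\ref{marginal} to reduce $\Delta(e)>0$ to $2t(1+r)>s^2$, and both close it with the Cauchy--Schwarz bound $s^2\le rt$. The only differences are cosmetic: the paper applies Cauchy--Schwarz to the vectors $\BB\LL^{2\dagger}\bb_e$ and $\BB\LL^{\dagger}\bb_e$ in $\mathbb{R}^m$ rather than to eigenbasis sums, and your explicit check that $t>0$ (via $\sum_{i\ge 2}\alpha_i=2$) justifies the strict positivity that the paper only asserts.
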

\begin{proof}Suppose $\uu =\BB\LL^{2\dagger}\bb_e$, $\vvv =\BB\LL^{\dagger}\bb_e,$ thus 
$$\uu^T\uu=\bb_e^T\LL^{2\dagger}\BB^T \BB\LL^{2\dagger}\bb_e=\bb_e^T\LL^{2\dagger}\LL\LL^{2\dagger}\bb_e=\bb_e^T\LL^{3\dagger}\bb_e.$$ Likewise, $\vvv^T\vvv=\bb_e^T\LL^{\dagger}\bb_e$, $\uu^T\vvv=\bb_e^T\LL^{2\dagger}\bb_e.$  By Cauchy-Schwarz inequality~\cite{steele2004cauchy}, we have $(\uu^T\vvv)^2\leq (\uu^T\uu) (\vvv^T\vvv)$, implying
 $$
 (\bb_e^T\LL^{2\dagger}\bb_e)^2 \leq  \bb_e^T\LL^{3\dagger}\bb_e  \bb_e^T\LL^{\dagger}\bb_e.
 $$
 
According to Lemma~\ref{marginal}, we have
 $$
\begin{aligned}
 \Delta(e) &\geq -n\frac{\bb_e^T\LL^{3\dagger}\bb_e  \bb_e^T\LL^{\dagger}\bb_e}{(1+\boldsymbol{b}_e^\top\boldsymbol{L}^\dagger\boldsymbol{b}_e)^2}+n\frac{2 \boldsymbol{b}_e^\top\boldsymbol{L}^{3\dagger}\boldsymbol{b}_e(1+\boldsymbol{b}_e^\top\boldsymbol{L}^\dagger\boldsymbol{b}_e)}{(1+\boldsymbol{b}_e^\top\boldsymbol{L}^\dagger\boldsymbol{b}_e)^2}\\
 &=n\frac{2 \boldsymbol{b}_e^\top\boldsymbol{L}^{3\dagger}\boldsymbol{b}_e+\boldsymbol{b}_e^\top\boldsymbol{L}^{3\dagger}\boldsymbol{b}_e\boldsymbol{b}_e^\top\boldsymbol{L}^\dagger\boldsymbol{b}_e}{(1+\boldsymbol{b}_e^\top\boldsymbol{L}^\dagger\boldsymbol{b}_e)^2} >0.
\end{aligned}
 $$ 
This establishes that \( \Delta(e) > 0 \), thereby completing the proof.
\end{proof}

This lemma indicates that adding any nonexistent edge to \(G\) decreases the total biharmonic distance. Since a smaller biharmonic distance indicates better performance, it is useful in applications like assessing the robustness of second-order consensus algorithms in noisy networks~\cite{bamieh2012coherence,Yi2022BiharmonicDP}. This motivates our exploration of minimizing the total biharmonic distance by adding \(k\) new edges.

\subsection{Problem Statement}
We now give a mathematical formulation of our problem. 
\begin{problem}\label{problem}
Given a connected graph $G=(V,E)$, a candidate edge set $Q=(V\times V) \backslash E$ and an integer $k \leq |Q|$, find a subset $T \subseteq Q$ of $k$ edges such that the total biharmonic distance of the graph $G^*=(V,E\cup T^*)$ is minimized. More formally, the objective is to find
\begin{equation*}
T^* = \argmin_{T \subseteq Q,|T|=k} B(G_T).
\end{equation*}
\end{problem}

Problem~\ref{problem} is inherently a combinatorial problem, which is computationally challenging.

\subsection{Properties of Objective Function}
We have already proved that the objective function of Problem~\ref{problem} $B(\cdot)$ is monotone in Lemma~\ref{decreasing}. Next, we show that function $B(\cdot)$ is not supermodular.

\begin{lemma}~\label{supermodular}
The objective function in Problem~\ref{problem}, specifically the total biharmonic distance \( B(\cdot) \), is not supermodular. This means that there exist two edge sets \( S \) and \( T \) such that \( S \subset T \subseteq Q \), and an edge \( e \in Q \backslash T \), for which the following inequality holds
\[
B(G_{S \cup \{e\}}) - B(G_S) > B(G_{T \cup \{e\}}) - B(G_T).
\]

\end{lemma}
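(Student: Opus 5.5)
Since the statement is existential, a single explicit counterexample suffices: a small connected graph $G$, edge sets $S\subset T\subseteq Q$, and an edge $e\in Q\setminus T$ with
\[
B(G_{S\cup\{e\}})-B(G_S) > B(G_{T\cup\{e\}})-B(G_T).
\]
Note the direction of this inequality. Since $B$ is decreasing, both differences are negative, so the claim says the decrease from adding $e$ is \emph{larger} in magnitude after $T$ than after $S$. In terms of the marginal decrease $\Delta$ of Lemma~\ref{marginal}, we must find $\Delta_{G_T}(e) > \Delta_{G_S}(e)$: adding more edges first makes $e$ \emph{more} useful. This is plausible for a squared-spectrum objective such as $n\sum_i 1/\lambda_i^2$. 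It also matches the intuition that an extra edge can create a new short cycle or path that makes a subsequent edge close a larger structure.

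I will take the simplest setting, $S=\emptyset$ and $T=\{f\}$, and search over small graphs for a pair $f,e$ of non-edges with $\Delta_{G_{\{f\}}}(e) > \Delta_G(e)$. Each side can be evaluated with the closed form of Lemma~\ref{marginal}, from $\bb_e^\top\LL^{\dagger}\bb_e$, $\bb_e^\top\LL^{2\dagger}\bb_e$ and $\bb_e^\top\LL^{3\dagger}\bb_e$. Equivalently, one can compute $B=n\sum_{i\ge2}\lambda_i^{-2}$ directly from the Laplacian spectra of the four graphs $G$, $G_{\{e\}}$, $G_{\{f\}}$ and $G_{\{e,f\}}$.

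The natural first candidates are paths or stars on $4$--$6$ nodes, since their spectra are explicit. For example, take the path $P_n$, let $f$ join an endpoint to an interior node, and let $e$ be another chord, or take a star with $e,f$ two leaf–leaf edges. When $S=\emptyset$, supermodularity is equivalent to $B(G)+B(G_{\{e,f\}})\ge B(G_{\{e\}})+B(G_{\{f\}})$. So I will compute the four totals and exhibit a case where this fails, that is, where
\[
B(G)-B(G_{\{e\}}) < B(G_{\{f\}})-B(G_{\{e,f\}}).
\]

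The main obstacle is finding the right instance, because many small symmetric configurations do satisfy the diminishing-returns inequality. My first try would be a graph in which $f$ alone barely changes the relevant eigenvalues, but $f$ and $e$ together close a cycle that raises $\lambda_2$ substantially. A concrete candidate is the path $1\!-\!2\!-\!3\!-\!4\!-\!5$ with $f=(1,3)$ and $e=(3,5)$, or alternatively $f=(1,4)$ and $e=(2,5)$. If these fail, I would do a small computer enumeration over all graphs with at most $6$ nodes and all pairs of non-edges, report one violating instance with its four $B$ values (exact rationals, or decimals to enough digits), and conclude that the inequality holds strictly there.
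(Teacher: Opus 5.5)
Your overall strategy matches the paper's. The paper's entire proof is one explicit witness: a 5-node graph (Figure~\ref{fig:supermodular}) with $S=\emptyset$, $T=\{(1,3)\}$ and $e=(1,4)$, whose two marginal changes are $-0.89$ and $-1.01$. Your reduction to $S=\emptyset$, $T=\{f\}$ and your rewritten form of the inequality are correct.

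The gap is that you never exhibit a witness, and for an existential claim proved by example the witness is the whole proof. Moreover, both concrete candidates you name fail:
\begin{itemize}
\item For the path $P_5$ (edges $(i,i+1)$) with $f=(1,3)$ and $e=(3,5)$, you get $B(G)=38$, $B(G_{\{e\}})=B(G_{\{f\}})=916/45\approx 20.36$, and $B(G_{\{e,f\}})=284/45\approx 6.31$ (the bowtie, spectrum $0,1,3,3,5$). The decrease shrinks from $17.64$ to $14.04$.
\item With $f=(1,4)$ and $e=(2,5)$, you get $B(G_{\{e\}})=B(G_{\{f\}})=9.45$ and $B(G_{\{e,f\}})=B(K_{2,3})=293/90\approx 3.26$. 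The decrease shrinks from $28.55$ to $6.19$.
\end{itemize}
In both cases the supermodular inequality holds, so neither is a counterexample.

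Your enumeration fallback would close the gap, since violations already occur on $5$ nodes. But nothing is proved until you report an instance and its four values. A hand-checkable one is the ``chair'' tree with edges $(1,2),(2,3),(2,5),(5,4)$, $f=(1,3)$ and $e=(1,4)$. Use $B=n\sum_{i\ge 2}\lambda_i^{-2}$, and for the roots of a cubic $\lambda^3-s_1\lambda^2+s_2\lambda-s_3$ use $\sum\lambda^{-2}=(s_2^2-2s_1s_3)/s_3^2$.
\begin{itemize}
\item $G$ has nonzero spectrum $\{1\}$ together with the roots of $\lambda^3-7\lambda^2+13\lambda-5$. So $B(G)=5(1+99/25)=124/5$.
\item Since $\ee_1-\ee_3$ is the eigenvector of eigenvalue $1$, adding $f$ only moves that eigenvalue to $3$. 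So $B(G_{\{f\}})=5(1/9+99/25)=916/45$.
\item $G_{\{e\}}$ is a $4$-cycle with a pendant. Its nonzero spectrum is $\{2\}$ together with the roots of $\lambda^3-8\lambda^2+18\lambda-10$, so $B(G_{\{e\}})=5(1/4+164/100)=189/20$.
\item $G_{\{e,f\}}$ is the house graph, with nonzero eigenvalues $(5\pm\sqrt5)/2$ and $(7\pm\sqrt5)/2$. So $B(G_{\{e,f\}})=5(3/5+27/121)=498/121$.
\end{itemize}
Hence
\[
B(G_{\{e\}})-B(G)=-15.35>-16.24\approx B(G_{\{e,f\}})-B(G_{\{f\}}),
\]
which is exactly the required violation with $S=\emptyset\subset T=\{f\}$.
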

\begin{proof}
        To prove the non-supermodularity of the objective function, consider the graph shown in Figure~\ref{fig:supermodular}. The solid black lines denote the edges of the original graph, the solid blue line represents the edge in \(T\), and the dashed red line indicates the edge \(e\). We define \(S=\emptyset\), \(T=\{(1,3)\}\), and \(e=(1,4)\). The computation yields \(B(G_{S \cup \{e\}}) - B(G_S)=-0.89\) and \(B(G_{T \cup \{e\}}) - B(G_T)=-1.01\), violating the condition of supermodularity since \(-0.89 > -1.01\). Therefore, the objective function is non-supermodular.
\end{proof}
\begin{figure}[!hbtp]
		\centering
		\includegraphics[width=0.4\linewidth]{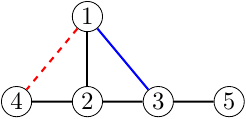}
		\caption{A counterexample with 5 nodes for Lemma~\ref{supermodular}. \label{fig:supermodular}}
\end{figure}

For traditional monotonically decreasing and supermodular functions, a simple greedy strategy can attain an approximation ratio of \((1-1/e)\) solution in each iteration~\cite{NeWoFi78}. Despite the objective function \(B(\cdot)\) being not supermodular, the greedy method can still remain a suitable strategy for solving Problem~\ref{problem} with guaranteed performance. We define the non-negative, monotone-increasing function \(f(T) = B(G) - B(G_T)\), where \(T \subseteq Q\). We introduce a new quantity \(\Theta_T(S) = f(S \cup T) - f(S)\), which represents the marginal benefit of adding the set \(T \subseteq Q\) to the set \(S \subseteq Q\). Utilizing this, we introduce the concepts of submodularity ratio and curvature.

\begin{definition}[\cite{BiBuKrTs17}]
For a non-negative set function \( f \), the submodularity ratio is defined as the largest scalar \(\gamma\) such that for any two arbitrary sets \(S \subseteq Q\) and \(T \subseteq Q\),
\begin{equation*}
\sum\nolimits_{i\in T\setminus S}\Theta_{i}(S)\geq\gamma\Theta_T(S). \Omega(n^2)
\end{equation*}
\end{definition}

\begin{definition}[\cite{BiBuKrTs17}]\label{curvature}
For a non-negative set function \(f\), the curvature is defined as the smallest scalar \(\alpha\) such that for any two arbitrary sets \(S \subseteq Q\) and \(T \subseteq Q\), and any element $j \in S\backslash T$,
\begin{equation*}
\Theta_j(S \cup T \backslash j)\geq(1-\alpha)\Theta_j(S\backslash j).
\end{equation*}
\end{definition}

We then establish the bounds for the submodularity ratio \(\gamma\) and curvature \(\alpha\) of the function \(f(T) = B(G) - B(G_T)\), which will be used in the next section.
 
\begin{lemma}\label{bound}
The submodularity ratio $\gamma$ of the function $f(T)=B(G)-B(G_T)$, $T \subseteq  Q$ is bounded by
$  1>\gamma\geq\left(\frac{\lambda_2(\boldsymbol{L})}n\right)^5,
$
and its curvature $\alpha$ is constrained by
$0<\alpha\leq1-\left(\frac{\lambda_2(\boldsymbol{L})}n\right)^5.
$
\end{lemma}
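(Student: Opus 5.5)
The plan is to use the standard route for bounding submodularity ratio and curvature, as in Bian et al.~\cite{BiBuKrTs17}. Both quantities are controlled by ratios of marginal gains $\Theta_e(S)=B(G_S)-B(G_{S\cup\{e\}})$. So I would bound every single-edge marginal gain from above and below, uniformly over all base sets $S\subseteq Q$. The upper bound should depend only on $n$. The lower bound should depend on $\lambda_2(\LL)$ of the original graph, which stays a valid lower bound on $\lambda_2(\LL_S)$ because adding edges only increases Laplacian eigenvalues. By Lemma~\ref{marginal} applied to $G_S$, and by the Cauchy--Schwarz step in the proof of Lemma~\ref{decreasing},
\[
\Theta_e(S)\ \ge\ n\,\frac{2\,\bb_e^\top\LL_S^{3\dagger}\bb_e+\bb_e^\top\LL_S^{3\dagger}\bb_e\,\bb_e^\top\LL_S^{\dagger}\bb_e}{(1+\bb_e^\top\LL_S^{\dagger}\bb_e)^2}.
\]
Dropping the negative term in Lemma~\ref{marginal} gives the upper bound $\Theta_e(S)\le 2n\,\bb_e^\top\LL_S^{3\dagger}\bb_e/(1+\bb_e^\top\LL_S^\dagger\bb_e)\le 2n\,\bb_e^\top\LL_S^{3\dagger}\bb_e$.

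The key steps are as follows.

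\textbf{Spectral estimates.} Since $\bb_e\perp\mathbf 1$ and $\norm{\bb_e}^2=2$, we have $2/\lambda_n^k\le \bb_e^\top\LL_S^{k\dagger}\bb_e\le 2/\lambda_2^k$. We also have $\lambda_n(\LL_S)\le n$, and $\lambda_2(\LL_S)\ge\lambda_2(\LL)$. Plugging these in gives $\Theta_e(S)\ge n\cdot (2\cdot 2/n^3)/(1+2/\lambda_2)^2$, which after simplification is of order $\lambda_2^2/n^4$ (using $\lambda_2\le n$ so that $1+2/\lambda_2\le 3/\lambda_2$ up to constants). It also gives $\Theta_e(S)\le 4n/\lambda_2^3$.

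\textbf{Submodularity ratio.} For $T\setminus S=\{e_1,\dots,e_t\}$, I write $\Theta_T(S)$ as a telescoping sum of single-edge gains along the chain $S\subset S\cup\{e_1\}\subset\cdots$. Each term is at most $4n/\lambda_2^3$, while each $\Theta_{e_i}(S)$ is at least $c\lambda_2^2/n^4$. Hence $\sum_i\Theta_{e_i}(S)/\Theta_T(S)\ge (c\lambda_2^2/n^4)/(4n/\lambda_2^3)=c'(\lambda_2/n)^5$. Then $\gamma<1$ is witnessed by the supermodularity violation in Lemma~\ref{supermodular}: there a marginal gain increases, so for $S=\emptyset$ and $T=\{(1,3),(1,4)\}$ the sum of singletons is smaller than the joint gain.

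\textbf{Curvature.} Similarly, $\Theta_j(S\cup T\setminus j)/\Theta_j(S\setminus j)\ge$ (lower bound)/(upper bound) $\ge(\lambda_2/n)^5$, which gives $\alpha\le 1-(\lambda_2/n)^5$. For $\alpha>0$ it suffices to exhibit one case where the marginal gain strictly decreases after adding edges. Since the gains are bounded and $B$ is bounded below, they cannot all be nondecreasing, so some instance with strict decrease exists; a small explicit example would make this concrete.

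The main obstacle is getting the constants to match the clean exponent $(\lambda_2/n)^5$ exactly, rather than up to a constant factor. The crude bounds give a factor like $4/9$ or $1/24$. I would first try to sharpen them. One option is to keep the factor $(2+\bb_e^\top\LL_S^\dagger\bb_e)/(1+\bb_e^\top\LL_S^\dagger\bb_e)^2$ exactly and bound it by monotonicity in $x=\bb_e^\top\LL_S^\dagger\bb_e\in[2/n,2/\lambda_2]$. Another is to use the bound $\lambda_n\le n$ more carefully, or to absorb constants via $\lambda_2\le n$ by trading a power of $\lambda_2/n$. If this fails, I would state the bound with an explicit constant.
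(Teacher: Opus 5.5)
Your route to the two quantitative bounds differs from the paper's, and it works. The constant you flag needs no sharpening of $(2+x)/(1+x)^2$.

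\textbf{The paper's route.} The paper never uses the closed form of Lemma~\ref{marginal}. Writing $\nu_i=\lambda_i(\LL(S))$ and $\mu_i=\lambda_i(\LL(S\cup T))$, it expresses $\Theta_T(S)=n\sum_{i\ge 2}(\mu_i-\nu_i)(\mu_i+\nu_i)/(\nu_i^2\mu_i^2)$. It then uses $\mu_i\ge\nu_i$ together with the trace identity $\sum_i(\mu_i-\nu_i)=\mathrm{Tr}(\LL(S\cup T))-\mathrm{Tr}(\LL(S))=2|T\setminus S|$. This sandwiches the joint gain directly between $4\lambda_2|T\setminus S|/n^3$ and $4n^2|T\setminus S|/\lambda_2^4$. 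Both bounds are linear in $|T\setminus S|$, so no telescoping is needed and the factors of $2$ cancel to give exactly $(\lambda_2/n)^5$.

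\textbf{Fixing your constant.} Your only lossy step is bounding $1+2/\lambda_2$ by something like $3n/\lambda_2$; the inequality $1+2/\lambda_2\le 3/\lambda_2$ you wrote holds only when $\lambda_2\le 1$. Instead, use that $G$ is not complete, since $Q\neq\emptyset$. For a non-adjacent pair $u,v$, Courant--Fischer with $\ee_u-\ee_v\perp\one$ gives $\lambda_2(\LL)\le (d_u+d_v)/2\le n-2$, hence $1+2/\lambda_2\le n/\lambda_2$.
\begin{itemize}
\item Your crude lower bound then becomes $\Theta_e(S)\ge n\cdot(4/n^3)\cdot(\lambda_2/n)^2=4\lambda_2^2/n^4$.
\item Your upper bound is $\Theta_e(S')\le 4n/\lambda_2^3$.
\item The ratio is therefore exactly $(\lambda_2/n)^5$: for $\gamma$ via your telescoping, and for $\alpha$ directly.
\end{itemize}
The paper's argument needs monotonicity of every Laplacian eigenvalue plus the trace identity. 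Yours reuses Lemmas~\ref{marginal} and~\ref{decreasing}, and needs only extremal-eigenvalue bounds on quadratic forms plus this one fact about $\lambda_2$.

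\textbf{The genuine gap: the strict inequalities.} Both of your arguments here fail.
\begin{itemize}
\item The example of Lemma~\ref{supermodular} shows $\gamma<1$ only for that particular 5-node graph, whereas $\gamma$ and $\alpha$ depend on $G$.
\item Your argument for $\alpha>0$ is invalid. $Q$ is finite, so boundedness of $B$ and of the gains puts no constraint on whether marginal gains can be nondecreasing.
\end{itemize}
In fact neither strict inequality holds for all $G$. If $|Q|=1$ (e.g.\ $G$ is $K_n$ minus one edge), the definitions give exactly $\gamma=1$ and $\alpha=0$.

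What does hold for every $G$ with $Q\neq\emptyset$ is:
\begin{itemize}
\item $\gamma\le 1$: take $S=\emptyset$, $T=\{q\}$, and use $\Theta_q(\emptyset)>0$ from Lemma~\ref{decreasing}.
\item $\alpha\ge 0$: take $S=\{j\}$, $T=\emptyset$.
\end{itemize}
The paper's proof likewise establishes only $\gamma\ge(\lambda_2/n)^5$ and $\alpha\le 1-(\lambda_2/n)^5$, and is silent on the strict parts. You should drop those claims rather than try to prove them in general.
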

\begin{proof}Let $Q= ( V\times V) \backslash E$ be the candidate edge set, $S$ and $T$ be two arbitrary subsets of $E$, and $u,v$ be any two subsets of the candidate edge set $Q$. To begin with, 
we first derive a lower and upper bound for the marginal benefit function $\Theta_T(S)=f(S\cup {T})-f(S)$, respectively. We use $\boldsymbol{L}(S)$ to denote the Laplacian matrix of the resultant graph $G_S$. On the one hand,
$$
\begin{aligned}
\Theta_T(S)&=f(S\cup T)-f(S)=B(G_S)-B(G_{S\cup T}) \\
&= n\trace{\boldsymbol{L}(S)^{2\dagger}}-n\trace{\boldsymbol{L}(S\cup T)^{2\dagger}} \\
&= \sum\nolimits_{i=2}^{n}\frac n{\lambda_i^2(\boldsymbol{L}(S))}-\frac n{\lambda_i^2(\boldsymbol{L}(S\cup T))} \\
&=n\sum\nolimits_{i=2}^n\frac{(\lambda_i(\boldsymbol{L}(S\cup T))-\lambda_i(\boldsymbol{L}(S)))(\lambda_i(\boldsymbol{L}(S\cup T))+\lambda_i(\boldsymbol{L}(S)))}{\lambda_i^2(\boldsymbol{L}(S))\lambda_i^2(\boldsymbol{L}(S\cup T))} \\
&\geq n\frac{(\mathrm{Tr}\left(\boldsymbol{L}(S\cup T)\right)-\mathrm{Tr}\left(\boldsymbol{L}(S)\right))(\lambda_2(\boldsymbol{L}(S\cup T))+\lambda_2(\boldsymbol{L}(S))}{\lambda_n^2(\boldsymbol{L}(S))\lambda_n^2(\boldsymbol{L}(S\cup T))} \\
&= \frac{2n|T\backslash S|(\lambda_2(\boldsymbol{L}(S\cup T))+\lambda_2(\boldsymbol{L}(S))}{\lambda_n^2(\boldsymbol{L}(S))\lambda_n^2(\boldsymbol{L}(S\cup T))}.\\ 
\end{aligned}
$$

On the other hand,
$$
\begin{aligned}
\Theta_T(S)&=n\sum\nolimits_{i=2}^n\frac{(\lambda_i(\boldsymbol{L}(S\cup T))-\lambda_i(\boldsymbol{L}(S)))(\lambda_i(\boldsymbol{L}(S\cup T))+\lambda_i(\boldsymbol{L}(S)))}{\lambda_i^2(\boldsymbol{L}(S))\lambda_i^2(\boldsymbol{L}(S\cup T))} \\
&\leq n\frac{(\mathrm{Tr}\left(\boldsymbol{L}(S\cup T)\right)-\mathrm{Tr}\left(\boldsymbol{L}(S)\right))(\lambda_n(\boldsymbol{L}(S\cup T))+\lambda_n(\boldsymbol{L}(S))}{\lambda_2^2(\boldsymbol{L}(S))\lambda_2^2(\boldsymbol{L}(S\cup T))}\\
&=\frac{2n|T\backslash S|(\lambda_n(\boldsymbol{L}(S\cup T))+\lambda_n(\boldsymbol{L}(S))}{\lambda_2^2(\boldsymbol{L}(S))\lambda_2^2(\boldsymbol{L}(S\cup T))}.
\end{aligned}
$$
Then, we put the above two bounds together and derive the lower bound of the submodular ratio $\gamma.$
$$
\begin{aligned}
&\quad \frac{\sum\nolimits_{e\in T\setminus S}\Theta_{e}(S)}{\Theta_T(S)} \\
&\geq\sum\nolimits_{e\in T\setminus S}\frac{(\lambda_2(\boldsymbol{L}(S\cup {e}))+\lambda_2(\boldsymbol{L}(S))}{\lambda_n^2(\boldsymbol{L}(S))\lambda_n^2(\boldsymbol{L}(S\cup e))}\cdot \frac{\lambda_2^2(\boldsymbol{L}(S))\lambda_2^2(\boldsymbol{L}(S\cup T))}{(\lambda_n(\boldsymbol{L}(S\cup T))+\lambda_n(\boldsymbol{L}(S))} \\
& \geq\left(\frac{\lambda_2(\boldsymbol{L})}{\lambda_n(\boldsymbol{L}(Q))}\right)^5=\left(\frac{\lambda_2(\boldsymbol{L})}n\right)^5.
\end{aligned}
$$
The last equality holds since the largest eigenvalue of the Laplacian matrix of a complete graph with $n$ nodes is $n.$
Similarly, we derive the upper bound of the curvature $\alpha.$ Let $j$ be any candidate edge in $S \backslash T.$ Then, we have
$$
\begin{aligned}
&\quad  \frac{\Theta_{j}(S\cup T\backslash j)}{\Theta_{j}(S\backslash j)} \\
&\geq\frac{2n(\lambda_2(\boldsymbol{L}(S\cup T\backslash j))+\lambda_2(\boldsymbol{L}(S\cup T))}{\lambda_n^2(\boldsymbol{L}(S\cup T\backslash j))\lambda_n^2(\boldsymbol{L}(S\cup T))}\cdot\frac{\lambda_2^2(\boldsymbol{L}(S\backslash j))\lambda_2^2(\boldsymbol{L}(S))}{2n(\lambda_n(\boldsymbol{L}(S \backslash j))+\lambda_n(\boldsymbol{L}(S))} \\
& \geq\left(\frac{\lambda_2(\boldsymbol{L})}{\lambda_n(\boldsymbol{L}(Q))}\right)^5=\left(\frac{\lambda_2(\boldsymbol{L})}n\right)^5,
\end{aligned}
$$
which combing with Definition~\ref{curvature} completes the proof.
\end{proof}

\subsection{Simple Greedy Approach}
We can solve Problem~\ref{problem} by the following na\"ive brute-force approach. For each subset of edges $T$ of the $\binom{|Q|}{k}$ possible subsets of edges, we can compute the total biharmonic distance for its associated graph when all edges in this set are added. Then, output the subset $T^*$ of edges, whose addition leads to the largest decrease in the total biharmonic distance. Although this method is simple, it is computationally impossible even for small networks since computing the total biharmonic distance involves inverting a matrix, a process that requires $\Omega(n^3)$ time. This results in a total complexity of $\Omega(\binom{|Q|}{k}n^3)$, which is very expensive even for small values of $k$.

To address the exponential complexity, we employ greedy heuristics. Based on Lemma~\ref{marginal}, we can devise a simple greedy approach choosing the largest marginal decrease in each iteration, as formulated in Algorithm~\ref{alg:deter}. 


\begin{algorithm}[h]	
\caption{\textsc{DeterDiff}$(G,k)$}\label{alg:deter}
\Input{Graph $G=(V,E)$; an integer $k$ }
\Output{An edge set $T \subseteq Q$ of size $k$}
            Compute $\LL^\dagger$, $\LL^{2\dagger}$, $\LL^{3\dagger}$ \;
		$T = \emptyset$,  $Q=(V\times V)\backslash E$ \; 
		\For{$i = 1$ to $k$}{
                Compute $ \Delta(e)=B(G)-B({G_{\{e\}}})$ for each $e \in Q$\;
                Select $e_i$ s.t. $e_i \gets \mathrm{arg\, max}_{e \in Q} \Delta(e)$ \;
			$T \gets T \cup \{e_i\}$, $G \gets G(V,E \cup \{e_i\})$, $Q \gets Q \backslash \{e_i\}$ \;
                Update $\LL^\dagger$ by ~\eqref{morrison}, $\LL^{2\dagger} $ by ~\eqref{Lsquare}, and $ \LL^{3\dagger} $ by \eqref{Lcube} \;
		}
		\Return $T$
\end{algorithm}

We maintain three matrices: \(\LL^\dagger\), \(\LL^{2\dagger}\), and \(\LL^{3\dagger}\). These matrices are precomputed with a time complexity of \(O(n^3)\) (Line 1). In each iteration, calculating \(\Delta(e)\) for any edge now takes \(O(1)\) per edge. Since \(\Delta(e)\) needs to be evaluated for all edges in \(Q\) to find the one with the maximum marginal decrease, the total cost per iteration is \(\Omega(n^2)\). Upon selecting and adding an edge \(e\) to the graph, \(\LL^\dagger\), \(\LL^{2\dagger}\), and \(\LL^{3\dagger}\) are updated using the Sherman-Morrison formula, taking \(O(n^2)\) for each update rather than directly inverting a matrix in time $O(n^3)$. Specifically, 
$\LL^\dagger$ is updated using~\eqref{morrison},
\(\LL^{2\dagger}\) is updated using~\eqref{Lsquare}, while \(\LL^{3\dagger}\) is updated according to the following equation:
\begin{equation}\label{Lcube}
\begin{aligned}
(\LL + \bb_e \bb_e^T)^{3\dag}=& \LL^{3\dagger}+ \frac{3\boldsymbol{L}^{2\dagger} \boldsymbol{b}_e\boldsymbol{b}_e^\top\boldsymbol{L}^{2\dagger}\boldsymbol{b}_e\boldsymbol{b}_e^\top \boldsymbol{L}^{\dagger}}{(1+\boldsymbol{b}_e^\top\boldsymbol{L}^\dagger\boldsymbol{b}_e)^2} - \frac{3\boldsymbol{L}^{2\dagger} \boldsymbol{b}_e\boldsymbol{b}_e^\top\boldsymbol{L}^{2\dagger}}{1+\boldsymbol{b}_e^\top\boldsymbol{L}^\dagger\boldsymbol{b}_e} -
\\&\frac{\boldsymbol{L}^{\dagger} \boldsymbol{b}_e\boldsymbol{b}_e^\top\boldsymbol{L}^{2\dagger}\boldsymbol{b}_e\boldsymbol{b}_e^\top\boldsymbol{L}^{2\dagger}\boldsymbol{b}_e\boldsymbol{b}_e^\top \boldsymbol{L}^{\dagger}}{(1+\boldsymbol{b}_e^\top\boldsymbol{L}^\dagger\boldsymbol{b}_e)^3}. 
\end{aligned}
\end{equation}
Thus, Algorithm~\ref{alg:deter} has a time complexity of \(O(n^3 + kn^2)\), much faster than the brute-force algorithm. We then analyze its approximation guarantee using the following theorem.

\begin{theorem}[\cite{BiBuKrTs17}]\label{bound guarantee}
The greedy algorithm offers an approximation ratio of at least \(\frac{1}{\alpha}\left(1 - e^{-\alpha\gamma}\right)\) for the problem of maximizing a non-negative increasing set function \(f\) with submodularity ratio \(\gamma\) and curvature \(\alpha\).
\end{theorem}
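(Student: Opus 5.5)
We would prove Theorem~\ref{bound guarantee} by the standard argument of Bian et al.\ for greedy maximization under a cardinality constraint, with the submodularity ratio and the curvature entering at two separate points. Let $S^t=\{e_1,\dots,e_t\}$ be the greedy set after $t$ steps, $T^*$ an optimal $k$-set, and $\mathrm{OPT}=f(T^*)$. Write $\delta_t=f(S^{t})-f(S^{t-1})$, the gain of step $t$. The aim is a recursive lower bound on each $\delta_{t+1}$ in terms of $\mathrm{OPT}$ and the gains already collected, and then to unroll that recursion.

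\textbf{Step 1: lower bound on $f(S^t\cup T^*)$.} Order the optimal edges outside $S^t$ and add them one at a time. We want to compare $f(S^t\cup T^*)$ with $\mathrm{OPT}$, losing only a factor tied to the greedy elements. For this we start from $T^*$ and add the greedy elements of $S^t\setminus T^*$ one at a time. The $j$-th of them, $e_j$, has marginal gain with respect to $T^*\cup S^{j-1}$ at least $(1-\alpha)\Theta_{e_j}(S^{j-1})=(1-\alpha)\delta_j$. This is exactly the curvature inequality of Definition~\ref{curvature}, applied with $S=S^j$ and $T=T^*$. It gives
\[
f(S^t\cup T^*)\ \ge\ \mathrm{OPT}+(1-\alpha)\sum\nolimits_{j\le t,\,e_j\notin T^*}\delta_j .
\]
Bounding the sum over $e_j\notin T^*$ correctly is the delicate point, and we expect it to be the main obstacle. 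The elements $e_j$ with $e_j\in T^*$ must be handled separately, following Bian et al.\ by splitting the greedy indices according to membership in $T^*$.

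\textbf{Step 2: submodularity ratio.} By the definition of $\gamma$ with $S=S^t$ and $T=T^*$,
\[
\sum\nolimits_{e\in T^*\setminus S^t}\Theta_e(S^t)\ \ge\ \gamma\,\Theta_{T^*}(S^t)=\gamma\bigl(f(S^t\cup T^*)-f(S^t)\bigr).
\]
The left side has at most $k$ terms, and greedy picks the best single edge, so $\delta_{t+1}\ge \frac{\gamma}{k}\bigl(f(S^t\cup T^*)-f(S^t)\bigr)$. Substituting Step 1 and $f(S^t)=\sum_{j\le t}\delta_j$ yields
\[
\delta_{t+1}\ \ge\ \frac{\gamma}{k}\Bigl(\mathrm{OPT}-\alpha\sum\nolimits_{j\le t}\delta_j\Bigr).
\]
This holds after we bound the terms with $e_j\in T^*$ by $\delta_j$, which only helps since $1-(1-\alpha)$ weighting is dominated.

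\textbf{Step 3: unrolling.} Set $a_t=\mathrm{OPT}-\alpha f(S^t)$. The recursion gives $a_{t+1}\le(1-\frac{\alpha\gamma}{k})a_t$, hence
\[
a_k\ \le\ \Bigl(1-\frac{\alpha\gamma}{k}\Bigr)^k\mathrm{OPT}\ \le\ e^{-\alpha\gamma}\,\mathrm{OPT}.
\]
Rearranging gives $f(S^k)\ge\frac1\alpha(1-e^{-\alpha\gamma})\,\mathrm{OPT}$, which is the claimed ratio.

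Non-negativity and monotonicity of $f$ (Lemma~\ref{decreasing}) ensure every $\delta_j\ge0$, so all the inequalities above are in the right direction. Combined with Lemma~\ref{bound}, this yields a concrete guarantee for Algorithm~\ref{alg:deter}.
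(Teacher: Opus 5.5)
The paper gives no proof of this statement; it quotes it from Bian et al. Your Steps~1 and~2 match the first half of their argument. The gap is at exactly the point you flagged, and your way of closing it goes in the wrong direction. Substituting Step~1 into the submodularity-ratio bound gives
\[
f(S^t\cup T^*)-f(S^t)\ \ge\ \mathrm{OPT}-\alpha\sum\nolimits_{j\le t,\,e_j\notin T^*}\delta_j-\sum\nolimits_{j\le t,\,e_j\in T^*}\delta_j ,
\]
so greedy elements already in $T^*$ carry coefficient $-1$, not $-\alpha$. Since $\alpha\le 1$, we have $-\delta_j\le-\alpha\delta_j$. Replacing $-1$ by $-\alpha$ therefore \emph{enlarges} the lower bound; it does not ``only help''. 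The resulting recursion $\delta_{t+1}\ge\frac{\gamma}{k}(\mathrm{OPT}-\alpha f(S^t))$ is in fact false. Take a modular $f$ with weights $10,1,1$ and $k=2$. Then $\gamma=1$, $\alpha=0$, $\mathrm{OPT}=11$, and greedy is optimal, yet $\delta_2=1<\frac{11}{2}$. So Step~3 has nothing valid to unroll.

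The missing ingredient is to keep the exact count $|T^*\setminus S^t|=k-|S^t\cap T^*|$ in Step~2. This count is what pays for the worse coefficient on the overlap, and it yields, for $t=0,\dots,k-1$,
\[
\mathrm{OPT}\ \le\ \alpha\sum\nolimits_{j\le t,\,e_j\notin T^*}\delta_j+\sum\nolimits_{j\le t,\,e_j\in T^*}\delta_j+\frac{k-|S^t\cap T^*|}{\gamma}\,\delta_{t+1}.
\]
You then need a separate argument: the minimum of $\sum_{j\le k}\delta_j$, over nonnegative $\delta_j$ satisfying these $k$ constraints, is smallest for the overlap-free pattern $S^{k-1}\cap T^*=\emptyset$. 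Bian et al.\ supply this with a linear-programming, worst-case-over-overlap-patterns argument in the style of Conforti and Cornu\'ejols. Only in the overlap-free case do the constraints reduce to your recursion $\mathrm{OPT}\le\alpha f(S^t)+\frac{k}{\gamma}\delta_{t+1}$. From there your Step~3 goes through unchanged and gives $\frac1\alpha\bigl(1-(1-\frac{\alpha\gamma}{k})^k\bigr)\mathrm{OPT}\ge\frac1\alpha(1-e^{-\alpha\gamma})\,\mathrm{OPT}$.
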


Theorem~\ref{bound guarantee}, together with the bounds for the submodularity ratio \(\gamma\) and curvature \(\alpha\) of the function \(f(T) = B(G) - B(G_T)\) stated in Lemma~\ref{bound}, leads to a performance analysis for Algorithm~\ref{alg:deter}. To enhance the theoretical accuracy of Algorithm~\ref{alg:deter}, a larger value of \((1 - e^{- \gamma \alpha}) / \alpha\) is desirable, which can be achieved with a higher \(\gamma\) and a lower \(\alpha\). According to Lemma~\ref{bound}, the bounds for \(\gamma\) and \(\alpha\) are closely related to \(\lambda_2(\boldsymbol{L})\).   A larger \(\lambda_2(\boldsymbol{L})\) results in a tighter theoretical bound. Since the complete graph has \(\lambda_2(\LL) = n\), denser graphs are likely to yield a better approximation ratio.  While the theoretical bound implies that the approximation relies heavily on the graph's structure, Algorithm~\ref{alg:deter} often performs very close to the optimal solutions in practice, as shown in Section~\ref{experimen}.

\subsection{Gradient-Based Greedy Approach}
Traditional heuristic greedy algorithms select edges based on the maximum marginal decrease or gain in the objective function. While this approach is commonly employed, researchers have also explored using gradients or partial derivatives of the relevant function to assess the impact of edge modifications~\cite{LiZh18,black2023understanding}. In this subsection, we explore the gradient of the total biharmonic distance to characterize the significance of an edge instead of the marginal decrease $\Delta(e)$. 

\begin{lemma}
\label{lemma:gradient}
For a connected graph \( G = (V, E) \), the gradient of total biharmonic distance \( c(e) \) for an edge \( e \in (V \times V) \backslash E \) is \(2n \boldsymbol{b}_e^\top \boldsymbol{L}^{3\dagger} \boldsymbol{b}_e \).
\end{lemma}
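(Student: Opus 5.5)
We read the gradient $c(e)$ as the derivative, at $w=0$, of the decrease in total biharmonic distance when edge $e$ is added with a continuous weight $w\ge 0$. Concretely, set $\LL(w)=\LL+w\,\bb_e\bb_e^\top$ and $g(w)=n\,\trace{\LL(w)^{2\dagger}}$. Then $c(e)=-\frac{d}{dw}g(w)\big|_{w=0}$, the rate at which $B$ decreases as the weight on $e$ grows. The plan is to compute $g(w)$ in closed form through the rank-one update machinery of Lemma~\ref{marginal} and differentiate.

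First, we generalize the Sherman-Morrison identity~\eqref{morrison} to weight $w$. Since $\bb_e\perp\one$, $\LL(w)$ keeps the same null space $\mathrm{span}(\one)$ for all $w\ge0$. The pseudoinverse formula $\LL(w)^\dagger=\LL^\dagger-\frac{w\LL^\dagger\bb_e\bb_e^\top\LL^\dagger}{1+w\,\bb_e^\top\LL^\dagger\bb_e}$ therefore holds for the same reason as~\eqref{morrison}. We verify this by multiplying by $\LL(w)$ and checking that the product gives the projector $\II-\frac1n\JJ$.

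Next, we square this expression exactly as in~\eqref{Lsquare} and take the trace, using cyclicity, as in the proof of Lemma~\ref{marginal}. Writing $r_j=\bb_e^\top\LL^{j\dagger}\bb_e$, this gives
\[
g(w)=n\,\trace{\LL^{2\dagger}}-\frac{2nw\,r_3}{1+w r_1}+\frac{n w^2 r_2^2}{(1+w r_1)^2}.
\]
Setting $w=1$ recovers Lemma~\ref{marginal}, which serves as a consistency check.

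Finally, we differentiate at $w=0$. The last term is $O(w^2)$, so its derivative at $0$ vanishes. The middle term has derivative $-2n r_3/(1+w r_1)^2$, which equals $-2n r_3$ at $w=0$. Hence $c(e)=-g'(0)=2n\,\bb_e^\top\LL^{3\dagger}\bb_e$, as claimed.

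As an alternative check, we can use the general identity $\frac{d}{dw}\trace{\LL(w)^{2\dagger}}=-2\,\trace{\LL^{3\dagger}\bb_e\bb_e^\top}$. This follows from $\frac{d}{dw}\LL^\dagger=-\LL^\dagger\bb_e\bb_e^\top\LL^\dagger$, which is valid because the range of $\LL(w)$ is constant.

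The only delicate point is justifying the differentiation of the pseudoinverse, that is, the validity of the weighted Sherman-Morrison formula. This rests on the null space of $\LL(w)$ remaining $\mathrm{span}(\one)$. We handle it by rewriting $\LL(w)^\dagger=(\LL(w)+\frac1n\JJ)^{-1}-\frac1n\JJ$. This expresses the pseudoinverse through an honest inverse that is smooth in $w$, so the derivative exists and equals the expression obtained above.
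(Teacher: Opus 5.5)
Your proposal is correct and follows essentially the same route as the paper: a weighted Sherman--Morrison update, squaring as in \eqref{Lsquare}, taking traces to obtain a closed form in the edge weight, and differentiating at zero weight. Your explicit convention $c(e)=-g'(0)$ (the rate of decrease) is cleaner than the paper's write-up, whose displayed difference quotient has its signs flipped; the actual derivative of $B$ with respect to the weight at zero is $-2n\,\bb_e^\top\LL^{3\dagger}\bb_e$.
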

\begin{proof}Let $w_e$ denote the weight of edge $e$. We have
$$
\begin{aligned}
c(e)&=\frac{\partial B(G_{\{e\}})}{\partial w_e}=n\frac{\partial \trace{(\boldsymbol{L}+\bb_e\bb_e^T)^{2\dagger})}}{\partial w_e}\\
&=n\lim_{\epsilon\to0}\frac1\epsilon\left[\trace{(\boldsymbol{L}+\epsilon\bb_e\bb_e^T)^{2\dagger}}-\trace{\boldsymbol{L}^{2\dagger}}\right].
\end{aligned}
$$
By~\eqref{Lsquare}, we obtain
$$
\begin{aligned}
&\trace{(\boldsymbol{L}+\epsilon\bb_e\bb_e^T)^{2\dagger}}-\trace{\LL^{2\dag}} \\
=& \epsilon \mathrm{Tr}\left( \frac{\LL^{2\dag} \bb_e \bb_e^T \LL^{\dag}}{1 + \epsilon \bb_e^T \LL^{\dag} \bb_e}\right)+ \epsilon \mathrm{Tr}\left(\frac{\LL^{\dag} \bb_e \bb_e^T \LL^{2\dagger}}{1 + \epsilon \bb_e^T \LL^{\dag} \bb_e}\right) - \mathrm{Tr}\left(\frac{(\epsilon\LL^{\dag} \bb_e \bb_e^T \LL^{\dag})^2}{(1 + \epsilon \bb_e^T \LL^{\dag} \bb_e)^2}\right)\\
=& \frac{2\epsilon\bb_e^T\LL^{3\dag} \bb_e}{1 + \epsilon \bb_e^T \LL^{\dag} \bb_e}- \frac{(\epsilon  \bb_e^T\LL^{\dag} \bb_e)^2}{(1 + \epsilon \bb_e^T \LL^{\dag} \bb_e)^2}.
\end{aligned}
$$
Then, we can derive that
$$
\begin{aligned}
c(e)
&=n\lim_{\epsilon\to0}\frac1\epsilon\left[\frac{2\epsilon\bb_e^T\LL^{3\dag} \bb_e}{1 + \epsilon \bb_e^T \LL^{\dag} \bb_e}- \frac{(\epsilon  \bb_e^T\LL^{\dag} \bb_e)^2}{(1 + \epsilon \bb_e^T \LL^{\dag} \bb_e)^2}\right]=2n\boldsymbol{b}_e^\top\boldsymbol{L}^{3\dagger}\boldsymbol{b}_e.
\end{aligned}
$$
This completes the proof.\end{proof}


For each edge \( e \in (V \times V) \backslash E \), the gradient of the total biharmonic distance is given by \( 2n \boldsymbol{b}_e^\top \boldsymbol{L}^{3\dagger} \boldsymbol{b}_e \). Since \( n \) is constant, the critical term reduces to \( \boldsymbol{b}_e^\top \boldsymbol{L}^{3\dagger} \boldsymbol{b}_e \). We redefine \( c(e) \) as \( \boldsymbol{b}_e^\top \boldsymbol{L}^{3\dagger} \boldsymbol{b}_e \). This leads to a gradient-based greedy algorithm that iteratively selects the edge with the largest gradient, as described in Algorithm~\ref{alg:DeterGrad}. The time complexity of this algorithm is \( O(n^3 + kn^2) \), following a similar analysis to that of Algorithm~\ref{alg:deter}. One interesting finding is that $c(e)$ is exactly the 3-harmonic distance. Recalling that the gradient of the Kirchhoff index is proportional to the biharmonic distance~\cite{YiShLiZh18}, we can deduce that the gradient of the total $k$-harmonic distance is proportional to the ($k+1$)-harmonic distance by a similar proof of Lemma~\ref{lemma:gradient}.

While the strategies for selecting edges in the greedy process differ between Algorithm~\ref{alg:deter} and Algorithm~\ref{alg:DeterGrad}, both approaches yield relatively effective solutions in the experimental evaluations presented in Section~\ref{experimen}. This indicates that using the gradient instead of the marginal decrease is a feasible method. This method is very common and performs well in our problem. In the next section, based on Algorithm \textsc{DeterGrad}, we design a nearly linear time algorithm to solve Problem~\ref{problem}.

\begin{algorithm}	
\caption{\textsc{DeterGrad}$(G,k)$}
\label{alg:DeterGrad}
\Input{Graph $G=(V,E)$; an integer $k$ }
\Output{An edge set $T \subseteq Q$ of size $k$}
             Compute $\LL^\dagger$, $\LL^{2\dagger}$, $\LL^{3\dagger}$ \;
		 $T = \emptyset$, $Q=(V\times V)\backslash E$ \;
		\For{$i = 1$ to $k$}{
                Compute $ c(e)$ for each $e \in Q$\;
                Select $e_i$ s.t. $e_i \gets \mathrm{arg\, max}_{e \in Q} c(e)$ \;
			 $T \gets T \cup \{e_i\}$, $G \gets G(V,E \cup \{e_i\})$, $Q \gets Q \backslash \{e_i\}$\;
                Update $\LL^\dagger$ by~\eqref{morrison}, $\LL^{2\dagger} $ by~\eqref{Lsquare}, and $\LL^{3\dagger} $ by~\eqref{Lcube} \;

		}
        
		\Return $T$
\end{algorithm}

\section{Fast Approximation Algorithm}

Both Algorithm~\ref{alg:deter} and Algorithm \textsc{DeterGrad} struggle with high time complexity, making them computationally prohibitive for large-scale networks. This challenge arises from several key factors. First, calculating \(c(e)\) requires inverting the Laplacian matrix, which has a time complexity of \( \Theta(n^3) \), presenting a significant computational burden. Second, each iteration involves examining all pairs of nodes to identify an edge to add and requires querying \( |Q| \) gradients, where \( n^2 - m - k \leq |Q| \leq n^2 - m \). This leads to a computational bottleneck. Finally, even with the use of the Sherman-Morrison formula to update \(\LL^{\dagger}\), \(\LL^{2\dagger}\), and \(\LL^{3\dagger}\), the time complexity per iteration remains \(O(n^2)\). Our primary objective is to optimize three aspects for improved efficiency. We begin by addressing the highest time cost, which is our first challenge. To reduce the high computational cost associated with matrix inversion, we employ the projection technique and the Laplacian solver to estimate \(c(e)\). This approach also eliminates the need to update \(\LL^{\dagger}\), \(\LL^{2\dagger}\), and \(\LL^{3\dagger}\), effectively addressing the third challenge simultaneously. For the second challenge, we introduce a pruning strategy based on geometric principles to approximate the convex hull, which helps to reduce the size of the candidate edge set. By integrating these methods, we propose an approximation algorithm, outlined in Algorithm~\ref{ApproxFast}. This algorithm achieves an \(\epsilon\)-approximation of \(c(e)\) in each of the \(k\) iterations, enabling approximate solutions with nearly linear time complexity.

\subsection{Projection Technique and Laplacian Solver}\label{projectionsolver}
We begin by reformulating \( \boldsymbol{b}_e^\top \boldsymbol{L}^{3\dagger} \boldsymbol{b}_e \) in terms of the \(\ell_2\)-norm.

$$
\begin{aligned}
\boldsymbol{b}_e^\top\boldsymbol{L}^{3\dagger}\boldsymbol{b}_e &= (\LL^\dagger \boldsymbol{b}_e)^\top \LL^\dagger \LL^\dagger \boldsymbol{b}_e = (\LL^\dagger \boldsymbol{b}_e)^\top \LL^\dagger \LL \LL^\dagger \LL^\dagger \boldsymbol{b}_e\\
&= (\LL^\dagger \boldsymbol{b}_e)^\top \LL^\dagger \BB^\top \BB \LL^\dagger \LL^\dagger \boldsymbol{b}_e = \|\BB \LL^{2\dagger}(\ee_u-\ee_v)\|^2.
\end{aligned}
$$

In this way, computing \( c(e) \) reduces to computing \( \|\BB \LL^{2\dagger}(\ee_u-\ee_v)\|^2 \) in \(\mathbb{R}^m\). However, calculating this exact \(\ell_2\)-norm is computationally intensive because the dimension \( m \) of the vectors \( \BB \LL^{2\dagger} \ee_i\) (where \( i = 1, 2, \ldots, n \)) is high, and it also requires matrix inversion. To alleviate this issue and reduce the dimensionality of the vectors, we employ the Johnson-Lindenstrauss lemma for dimensionality reduction.

\begin{lemma}[\cite{JoLi84,Ac01}] \label{Jllemma}
Given fixed vectors \(\vvv_1,\vvv_2,\ldots,\vvv_n\in \mathbb{R}^d\) and \(\beta>0\), let \(\QQ_{t\times d}\) be a random matrix, where each entry is either \(1/\sqrt{t}\) or \(-1/\sqrt{t}\), with probability of 1/2, and \(t \geq 24\log (n)/\beta^2\). Then, with probability at least \(1-1/n\), the following holds for all pairs \(i,j \leq n\):
\[
(1-\beta)\|\vvv_i-\vvv_j\|^2_2 \leq \|\QQ \vvv_i - \QQ \vvv_j\|^2_2 \leq (1+\beta)\|\vvv_i-\vvv_j\|^2_2.
\]
\end{lemma}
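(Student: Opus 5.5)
This is the Johnson–Lindenstrauss lemma in Achlioptas' form. The plan is the standard route: prove a concentration bound for a single fixed vector, then take a union bound over all pairs.

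\textbf{Reduction to one vector.} The map $\xx\mapsto \QQ\xx$ is linear, so $\QQ\vvv_i-\QQ\vvv_j=\QQ(\vvv_i-\vvv_j)$. It therefore suffices to show that for one fixed $\xx\in\mathbb{R}^d$,
\[
\Pr\left[\bigl|\norm{\QQ\xx}^2-\norm{\xx}^2\bigr|>\beta\norm{\xx}^2\right]\le 2n^{-3}.
\]
There are $\binom{n}{2}<n^2/2$ pairs, so a union bound gives total failure probability below $1/n$. We may also normalize so that $\norm{\xx}=1$.

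\textbf{Concentration for a fixed unit vector.} Write $\norm{\QQ\xx}^2=\frac1t\sum_{r=1}^t Y_r^2$, where $Y_r=\sum_{s}\sigma_{rs}\xx_s$ and the signs $\sigma_{rs}=\pm1$ are i.i.d. The $Y_r$ are i.i.d., and $\mathbb{E}[Y_r^2]=\norm{\xx}^2=1$, so the expectation is correct.

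For the upper tail, I would apply Markov's inequality to $\exp(h\sum_r Y_r^2)$, which gives
\[
\Pr\Bigl[\sum_r Y_r^2>(1+\beta)t\Bigr]\le \Bigl(\mathbb{E}[e^{hY^2}]\,e^{-h(1+\beta)}\Bigr)^t .
\]
The crux, following Achlioptas, is the moment comparison $\mathbb{E}[Y^{2k}]\le\mathbb{E}[g^{2k}]$ for $g\sim N(0,1)$. To prove it, expand $Y^{2k}$ as a multinomial sum. Only terms with all exponents even survive, and each Rademacher moment $\mathbb{E}[\sigma^{2a}]=1$ is at most the Gaussian moment $(2a-1)!!$. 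Since $\sum_s \xx_s g_s\sim N(0,1)$ when $\norm{\xx}=1$, this yields $\mathbb{E}[Y^{2k}]\le(2k-1)!!$. Consequently $\mathbb{E}[e^{hY^2}]\le(1-2h)^{-1/2}$, exactly as in the Gaussian case.

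Choosing $h=\frac{\beta}{2(1+\beta)}$ then gives the bound $\bigl((1+\beta)e^{-\beta}\bigr)^{t/2}\le \exp\bigl(-t(\beta^2/4-\beta^3/6)\bigr)$.

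For the lower tail, I would use $\mathbb{E}[e^{-hY^2}]\le 1-h\,\mathbb{E}[Y^2]+\frac{h^2}{2}\mathbb{E}[Y^4]$ together with $\mathbb{E}[Y^4]\le3$. This gives a bound of the same form, $\exp(-t(\beta^2/4-\beta^3/6))$.

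\textbf{Choosing $t$.} For $\beta\le 1/2$ we have $\beta^2/4-\beta^3/6\ge\beta^2/8$ up to constants. Then $t\ge 24\log(n)/\beta^2$ makes each tail at most $n^{-3}$, which matches the single-vector bound required above.

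\textbf{Main obstacle.} The hard step is the moment domination $\mathbb{E}[Y^{2k}]\le\mathbb{E}[g^{2k}]$, together with tracking constants carefully enough to reach $24$. Achlioptas obtains exactly $(4+2c)\log n/(\beta^2/2-\beta^3/3)$, which reduces to $24\log n/\beta^2$ for the relevant range of $\beta$. Since the paper cites \cite{JoLi84,Ac01}, it is reasonable to defer those constants to that reference.
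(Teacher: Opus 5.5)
Your proposal is correct. It reproduces Achlioptas' argument: Gaussian moment domination $\mathbb{E}[Y^{2k}]\le(2k-1)!!$, Chernoff bounds on both tails with $h=\beta/(2(1+\beta))$, and a union bound over fewer than $n^2/2$ pairs. The paper gives no proof of its own and simply cites Johnson--Lindenstrauss and Achlioptas, so this is the intended argument.

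One tightening: $\beta^2/4-\beta^3/6\ge\beta^2/8$ holds exactly for $\beta\le 3/4$, so you don't need ``up to constants'' and the constant $24$ comes out exactly. Some upper restriction on $\beta$ is genuinely necessary, not cosmetic. For $\beta=\sqrt{24\log n}$ the statement allows $t=1$, and then a spread-out difference such as $\one/\sqrt{d}$ with $d$ large violates the upper bound with probability of order $\beta^{-1/2}e^{-\beta/2}\gg 1/n$. This is harmless here, because the paper only uses $\beta\le 1/3$.
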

Let \(\boldsymbol{Q} \in \mathbb{R}^{t \times m}\) be a random matrix with entries \(\pm 1/\sqrt{t}\), where \(t = \lceil24\log(n)/\beta^2\rceil\). By Lemma~\ref{Jllemma}, we have the approximation:
\begin{equation}\label{eqimportant}
 \|\boldsymbol{Q}\BB \LL^{2\dagger}(\ee_u-\ee_v)\|^2  \approx_{\beta} \|\BB \LL^{2\dagger}(\ee_u-\ee_v)\|^2
\end{equation}
for every pair \((u,v)\) with probability at least \(1 - 1/n\).

We have projected a set of \(n\) \(m\)-dimensional vectors onto a low \(t\)-dimensional subspace spanned by the columns of the matrix \(\QQ\). However, the high computational cost of \(\Omega(n^3)\) associated with matrix inversion still exists. To circumvent this expensive operation, we utilize a fast SDD linear system solver to solve the corresponding linear equations.

\begin{lemma}[\cite{SpTe14,CoKyMiPaJaPeRaXu14}]\label{solver}
Let \(\XX \in\mathbb{R}^{n\times n}\) be a positive semi-definite matrix with \(m\) nonzero entries, and \(\bb \in\mathbb{R}^n\) a vector. For any error parameter \(\delta>0\), there exists a solver, denoted by $\aa = \SDDMSolver(\XX, \bb, \delta)$, which returns a vector \(\aa \in \mathbb{R}^n\) satisfying
\[
\left \| \aa - \XX^{-1} \bb \right \|_{\XX} \leq \delta \left \| \XX^{-1} \bb \right \|_{\XX}
\]
with probability at least \(1 - 1/n\). The expected runtime of this solver is \(\tilde{O}(m)\), where \(\tilde{O}(\cdot)\) suppresses poly(\(\log (n)\)) factors.
\end{lemma}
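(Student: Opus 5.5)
Since this lemma is quoted from \cite{SpTe14,CoKyMiPaJaPeRaXu14}, the plan is to reconstruct the standard argument behind it rather than derive anything new. First I would make the hypotheses precise. The matrix $\XX$ has to be symmetric diagonally dominant, in our application a graph Laplacian $\LL$ (or $\LL$ restricted to $\mathbf{1}^\perp$). The expression $\XX^{-1}\bb$ should be read as $\XX^{\dagger}\bb$, with $\bb \perp \mathbf{1}$; this is exactly how the lemma is used later, with right-hand sides of the form $\BB^\top\QQ^\top$ columns or $\LL^\dagger(\cdot)$. With that fixed, the goal is an iterative method whose error in the $\XX$-norm contracts geometrically, at an $\tilde{O}(m)$ cost per unit of $\log(1/\delta)$.

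The core is preconditioning. I would build a sparse preconditioner $\MM$ with $\MM \preceq \XX \preceq \kappa \MM$ that is cheap to solve against.
\begin{enumerate}
\item Take a low-stretch spanning tree $T$ of the graph of $\XX$, with total stretch $\tilde{O}(m)$.
\item Sample off-tree edges with probability proportional to their stretch, reweighting them, to obtain an ultrasparsifier. A matrix Chernoff bound shows the spectral sandwich holds with probability at least $1-1/n$; this is where the randomness and the failure probability in the lemma come from.
\item Eliminate the degree-one and degree-two vertices of $\MM$ (partial Cholesky), which shrinks it to a much smaller SDD system.
\item Recurse on the smaller system, and run preconditioned Chebyshev iteration (or conjugate gradient) on $\XX$ with $\MM$ as the preconditioner.
\end{enumerate}
Each outer step then reduces $\norm{\aa-\XX^{\dagger}\bb}_{\XX}$ by a constant factor after $O(\sqrt{\kappa})$ inner iterations. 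Hence $O(\sqrt{\kappa}\log(1/\delta))$ iterations yield the relative bound $\delta\norm{\XX^{\dagger}\bb}_{\XX}$.

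For the runtime, I would set up the recursion $W(m) = O(\sqrt{\kappa})\bigl(m + W(m')\bigr)$, where $m'$ is the size of the system after elimination. Choosing $\kappa$ polylogarithmic makes $m'$ shrink by a large enough factor that the recursion solves to $\tilde{O}(m\log(1/\delta))$ in expectation. The main obstacle is this balance: bounding the error accumulated from inexact inner solves in the recursion so that the preconditioned iteration still converges. I would handle it with the standard robustness analysis of preconditioned Chebyshev iteration under approximate preconditioner application, in the form given in \cite{CoKyMiPaJaPeRaXu14}. Finally, a union bound over the $O(\log n)$ recursion levels gives the overall success probability $1-1/n$.
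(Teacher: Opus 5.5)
The paper gives no argument for this lemma. It imports it as a black box from \cite{SpTe14,CoKyMiPaJaPeRaXu14}, so your sketch is a more explicit route to the same endpoint, and in outline it is sound.

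Your construction is a low-stretch tree, off-tree sampling by stretch, greedy elimination of degree-one and degree-two vertices, and recursive preconditioned Chebyshev with polylogarithmic $\kappa$. That is the Koutis--Miller--Peng form of the recursive-preconditioning framework, which the cited Cohen et al.\ paper refines; Spielman and Teng build their ultrasparsifiers differently, but the recursion is the same. Like the paper, you defer the delicate step (error accumulation from inexact inner solves) to the literature, so your version is no more rigorous than the citation.

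What your version buys is clarity about the hypotheses, and there your reformulation is right and actually needed. As literally stated (arbitrary positive semi-definite $\XX$, $\XX^{-1}$, $\tilde{O}(m)$ for every $\delta>0$), the lemma is not what the references prove. One needs SDD (here Laplacian) structure and $\XX^{\dagger}$ with $\bb$ in the range. The cost is $\tilde{O}(m\log(1/\delta))$. This is $\tilde{O}(m)$ in the paper only because its $\delta$ is $\Theta(\beta n^{-13/2})$, so $\log(1/\delta)=O(\log n+\log(1/\epsilon))$.

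There are two small corrections. First, at the inner levels of the recursion the approximate solve should be a fixed number of Chebyshev steps, which is a linear operator approximating $\MM^{\dagger}$ spectrally. Conjugate gradient is fine at the top level, but its output depends nonlinearly on the right-hand side, and the standard robustness analysis does not cover that.

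Second, the right-hand sides in the paper are not all of the form $\LL^{\dagger}(\cdot)$. The second call in Lemma~\ref{2laplacian} is made on $\zztil^{(1)}$, the solver's own output. To meet your hypothesis $\bb\perp\one$, you must either project $\zztil^{(1)}$ onto $\one^{\perp}$ (an $O(n)$ step) or check that the solver returns vectors in $\one^{\perp}$.
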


To approximate \(\|\BB \LL^{2\dagger}(\ee_u-\ee_v)\|^2\), we apply the solver as follows. For any small \(\beta > 0\), we choose an appropriate \(\delta\) to ensure the approximation error remains within \(\beta\) with probability at least \(1 - 1/n\). Let \(\ZZ = \QQ\BB\LL^{2\dagger}\). We solve the system of equations \(\LL^{2} \zz_i = \yy_i\), with the constraint \(\one^\top \zz_i = 0\), for \(i = 1, \dots, t\), where \(\zz^\top_i\) and \(\yy^\top_i\) represent the \(i\)-th rows of \(\ZZ\) and \(\QQ\BB\), respectively. By using the solver \(\SDDMSolver(\LL, \yy_i, \delta)\) twice, and carefully selecting \(\delta\), we achieve a reliable approximation of \(\zz_i\). Before proving this, we first introduce some lemmas (all missing proofs are provided in Appendix~\ref{appendix}).

\begin{lemma}\label{lem:Lnorm}
	Given a Laplacian matrix $\LL$ and a vector $\zz$ orthogonal to $\one$, we have
 \begin{equation}\label{eq:Lnorm}
		\norm{\LL^{\dagger}\zz}_{\LL} \le 2n^4 \norm{\zz}_{\LL}.
	\end{equation}
\end{lemma}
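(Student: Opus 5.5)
We will prove \eqref{eq:Lnorm} spectrally, by expanding both sides in the orthonormal eigenbasis $\uu_1,\dots,\uu_n$ of $\LL$ introduced in the preliminaries. Since $\zz\perp\one$ and $\uu_1=\one/\sqrt{n}$, we can write $\zz=\sum_{i=2}^n \alpha_i \uu_i$ with $\alpha_i=\uu_i^\top\zz$. Then $\LL^\dagger\zz=\sum_{i\ge 2}\alpha_i\lambda_i^{-1}\uu_i$. Substituting into the definition $\norm{\xx}_{\LL}^2=\xx^\top\LL\xx$ gives
\[
\norm{\LL^\dagger\zz}_{\LL}^2=\sum\nolimits_{i=2}^n \frac{\alpha_i^2}{\lambda_i}, \qquad \norm{\zz}_{\LL}^2=\sum\nolimits_{i=2}^n \lambda_i\alpha_i^2 .
\]

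The claim then follows from a termwise comparison. It suffices to show $\alpha_i^2/\lambda_i \le (2n^4)^2\lambda_i\alpha_i^2$ for every $i\ge 2$, which is equivalent to $\lambda_i^2\ge 1/(4n^8)$, i.e.\ $\lambda_i\ge 1/(2n^4)$. This lower bound is exactly the nonzero-spectrum content of the inequality $\frac{1}{2n^4}\LL\LL^\dagger\preceq\LL$ quoted from~\cite{Ch97}. Indeed, $\LL\LL^\dagger=\sum_{i\ge2}\uu_i\uu_i^\top$ is the projection onto $\one^\perp$, and testing the inequality against $\uu_i$ yields $\lambda_i\ge 1/(2n^4)$ for all $i\ge 2$.

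Summing the termwise bounds gives $\norm{\LL^\dagger\zz}_{\LL}^2\le 4n^8\norm{\zz}_{\LL}^2$, and taking square roots gives \eqref{eq:Lnorm}. The alternative route via $\LL^\dagger\preceq 2n^4\II$ would give $\norm{\LL^\dagger\zz}_{\LL}^2=\zz^\top\LL^\dagger\zz$ bounded only in terms of the Euclidean norm $\norm{\zz}$, not $\norm{\zz}_{\LL}$. The spectral form is the one that matches the two $\LL$-norms, so we use it.

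There is no real obstacle. The only point needing care is that $\zz\perp\one$ eliminates the $\uu_1$ component, which is what justifies restricting to $\lambda_i$ with $i\ge2$ where the quoted eigenvalue lower bound applies. The bound is very loose: for a connected unweighted graph one actually has $\lambda_2\ge 4/n^2$. We nevertheless state it with the constant $2n^4$ so that it is consistent with the bounds from~\cite{Ch97} used elsewhere in the paper.
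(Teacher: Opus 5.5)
Your proof is correct, and it differs from the paper's mainly in organization. The paper passes through the Euclidean norm. It bounds $\norm{\LL^{\dagger}\zz}_{\LL}^2=\zz^\top\LL^{\dagger}\zz\le 2n^4\norm{\zz}^2$ using $\LL^{\dagger}\preceq 2n^4\II$. It then bounds $\norm{\zz}_{\LL}^2=\zz^\top\LL\zz\ge \norm{\zz}^2/(2n^4)$ using $\frac{1}{2n^4}\LL\LL^{\dagger}\preceq\LL$ together with $\zz\perp\one$, and chains the two.

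So your remark that the route via $\LL^{\dagger}\preceq 2n^4\II$ yields only a Euclidean bound is mistaken. That is exactly the paper's route, and it closes once you pair it with the second quoted inequality.

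In substance the two arguments coincide. Both quoted Loewner inequalities are restatements of $\lambda_2\ge 1/(2n^4)$, and both proofs actually give $\norm{\LL^{\dagger}\zz}_{\LL}\le \lambda_2^{-1}\norm{\zz}_{\LL}$. Your termwise comparison in the eigenbasis makes clear that a single eigenvalue bound suffices, and that $1/\lambda_2$ is the sharp constant, attained at $\zz=\uu_2$. The paper's version is a two-line Loewner-order sandwich that never writes out the spectral decomposition.
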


\begin{proof}$\norm{\LL^{\dagger}\zz}_{\LL} = \sqrt{\zz^\top\LL^{\dagger}\zz} \le \sqrt{2n^4}\norm{\zz},$
and $\norm{\zz}_{\LL}  = \sqrt{\zz^\top\LL\zz}   \ge \frac{1}{\sqrt{2n^4}}\norm{\zz},$ which further suggest that $ \norm{\LL^{\dagger}\zz}_{\LL} \le 2n^4 \norm{\zz}_{\LL}.$
\end{proof}

\begin{lemma}~\label{2laplacian}
	Given two vector sequences \(\zz^{(1)},\zz^{(2)}\) and \(\zztil^{(1)},\zztil^{(2)}\) such that
	\[
		\zz^{(1)} = \LL^{\dagger} \yy, \quad \zz^{(2)} = \LL^{\dagger} \zz^{(1)},
	\]
	and
	\[
		\zztil^{(1)} = \SDDMSolver(\LL, \yy, \delta), \quad \zztil^{(2)} = \SDDMSolver(\LL, \zztil^{(1)}, \delta),
	\]
we have the following error bound:
	\begin{equation}\label{eq:errorZ}
	  		\norm{\zztil^{(2)} - \zz^{(2)}}_{\LL} \leq \delta' \norm{\LL^{\dagger}\yy}_{\LL},  
	\end{equation}
	where \(\delta' = 2n^4 \left( \delta^2 + 2\delta \right)\).
\end{lemma}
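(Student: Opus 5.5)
We split the error with the triangle inequality, inserting the intermediate vector $\LL^{\dagger}\zztil^{(1)}$, which is the exact solve applied to the approximate first-stage output:
\[
\norm{\zztil^{(2)} - \zz^{(2)}}_{\LL} \le \norm{\zztil^{(2)} - \LL^{\dagger}\zztil^{(1)}}_{\LL} + \norm{\LL^{\dagger}\zztil^{(1)} - \LL^{\dagger}\zz^{(1)}}_{\LL}.
\]
The first term is controlled by the solver guarantee of Lemma~\ref{solver}. The second is controlled by first-stage solver accuracy, transported through $\LL^{\dagger}$ via Lemma~\ref{lem:Lnorm}. Throughout, $\SDDMSolver(\LL,\cdot,\delta)$ is read as returning an approximation of $\LL^{\dagger}(\cdot)$ for inputs orthogonal to $\one$. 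We assume (or enforce by projecting off the $\one$-component, which does not change $\LL$-norms) that $\yy$ and $\zztil^{(1)}$ are orthogonal to $\one$, so that Lemma~\ref{lem:Lnorm} applies.

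\textbf{Step 1 (first-stage error).} Lemma~\ref{solver} gives $\norm{\zztil^{(1)} - \zz^{(1)}}_{\LL} \le \delta\norm{\zz^{(1)}}_{\LL} = \delta\norm{\LL^{\dagger}\yy}_{\LL}$. By the triangle inequality this also gives $\norm{\zztil^{(1)}}_{\LL} \le (1+\delta)\norm{\LL^{\dagger}\yy}_{\LL}$.

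\textbf{Step 2 (second term).} Apply Lemma~\ref{lem:Lnorm} to the vector $\zztil^{(1)}-\zz^{(1)}$, which is orthogonal to $\one$:
\[
\norm{\LL^{\dagger}(\zztil^{(1)}-\zz^{(1)})}_{\LL} \le 2n^4\delta\norm{\LL^{\dagger}\yy}_{\LL}.
\]

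\textbf{Step 3 (first term).} Lemma~\ref{solver} gives $\norm{\zztil^{(2)} - \LL^{\dagger}\zztil^{(1)}}_{\LL} \le \delta\norm{\LL^{\dagger}\zztil^{(1)}}_{\LL}$. Applying Lemma~\ref{lem:Lnorm} and then Step 1,
\[
\delta\norm{\LL^{\dagger}\zztil^{(1)}}_{\LL} \le 2n^4\delta\norm{\zztil^{(1)}}_{\LL} \le 2n^4\delta(1+\delta)\norm{\LL^{\dagger}\yy}_{\LL}.
\]

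\textbf{Step 4 (combine).} Adding the two bounds gives
\[
2n^4\bigl(\delta+\delta+\delta^2\bigr)\norm{\LL^{\dagger}\yy}_{\LL} = 2n^4(\delta^2+2\delta)\norm{\LL^{\dagger}\yy}_{\LL},
\]
which is exactly $\delta'$. Both solver calls are assumed to succeed; this holds with probability at least $1-2/n$ by a union bound, or we simply condition on success as the statement implicitly does.

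The main obstacle is the bookkeeping of which norm each bound lives in. Lemma~\ref{solver} bounds errors relative to $\norm{\XX^{-1}\bb}_{\XX}$, where $\XX^{-1}$ must be read as the pseudoinverse on $\one^{\perp}$. The only way to pass from an $\LL$-norm bound on a vector to an $\LL$-norm bound on its image under $\LL^{\dagger}$ is the crude factor $2n^4$ of Lemma~\ref{lem:Lnorm}, and this factor must be invoked exactly once in each of the two terms so that the constants match $\delta'$.
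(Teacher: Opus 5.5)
Your proof is correct and follows the paper's argument essentially step for step: the same triangle-inequality split through $\LL^{\dagger}\zztil^{(1)}$, the solver guarantee combined with Lemma~\ref{lem:Lnorm} on each term, and the bound $\norm{\zztil^{(1)}}_{\LL} \le (1+\delta)\norm{\LL^{\dagger}\yy}_{\LL}$, yielding $2n^4(\delta^2+2\delta)$. Your remarks about orthogonality to $\one$ and the union bound over the two solver calls are small points of care that the paper leaves implicit.
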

\begin{proof}First, according to Lemma~\ref{solver}, we have
	\begin{equation}\label{eq:lap}
		\norm{\zztil^{(1)}-\LL^{\dagger}\yy}_{\LL} \le \delta \norm{\LL^{\dagger}\yy}_{\LL},
	\end{equation}
 	\begin{equation}\label{eq:lap2}
		\norm{\zztil^{(2)}-\LL^{\dagger}\zztil^{(1)}}_{\LL} \le \delta \norm{\LL^{\dagger}\zztil^{(1)}}_{\LL}.
	\end{equation}
 Since $\norm{\cdot}_{\LL}$ is a norm, (\ref{eq:lap}) can be recast as
	\begin{equation}\label{recast}
	(1-\delta) \norm{\LL^{\dagger}\yy}_{\LL} \le \norm{\zztil^{(1)}}_{\LL} \le (1+\delta) \norm{\LL^{\dagger}\yy}_{\LL}.
	\end{equation}
Then, we apply the triangle inequality on the left-hand side of (\ref{eq:errorZ}):
	\begin{align*}
		\norm{\zztil^{(2)}-\zz^{(2)}}_{\LL} 
		\le& \norm{\zztil^{(2)} - \LL^{\dagger}\zztil^{(1)}}_{\LL} + \norm{\LL^{\dagger}\zztil^{(1)}-\LL^{\dagger}\zz^{(1)}}_{\LL}\\
		\le& \delta \norm{\LL^{\dagger}\zztil^{(1)}}_{\LL} + 2n^4 \norm{\zztil^{(1)}-\zz^{(1)}}_{\LL} \\
  \le& \delta (2n^4)\norm{\zztil^{(1)}}_{\LL} + 2n^4 \delta \norm{\LL^{\dagger}\yy}_{\LL}\\
    \le& \delta (2n^4)(1+\delta) \norm{\LL^{\dagger}\yy}_{\LL} + 2n^4 \delta \norm{\LL^{\dagger}\yy}_{\LL}\\
		\le& 2n^4 \kh{\delta^2+2\delta}\norm{\LL^{\dagger}\yy}_{\LL},
	\end{align*}
	where the second inequality is due to (\ref{eq:Lnorm}) and (\ref{eq:lap}), the third inequality is due to (\ref{eq:Lnorm}) and the fourth inequality is due to (\ref{recast}).\end{proof}

 We then show the constructed $\| \ZZtil(\ee_u - \ee_v) \|^2$ is a reliable approximation of $\|\BB \LL^{2\dagger}(\ee_u-\ee_v)\|^2$ as stated in the following lemma.
\begin{lemma}\label{lem:error}
Given a \(t \times n\) matrix \(\ZZ\), for any $0 \le \beta \leq 1/3$, suppose that for any pair of nodes \(u,v \in V\):
\[
	(1-\beta) \|\BB \LL^{2\dagger} (\ee_u - \ee_v)\|^2 \leq \|\ZZ (\ee_u - \ee_v)\|^2 \leq (1+\beta) \|\BB \LL^{2\dagger} (\ee_u - \ee_v)\|^2.
\]
Let \(\zz^\top_i\) be the \(i\)-th row of \(\ZZ\), and \(\tilde{\zz}_i\) be an approximation of \(\zz_i\), satisfying
\begin{equation}\label{laplacain eq}
    \|\zz_i - \tilde{\zz}_i\|_{\LL} \leq \delta' \|\LL^{\dagger} \yy_i\|_{\LL},
\end{equation}
where
\begin{equation}\label{delta'}
      \delta' \leq \frac{\beta}{3n^2} \sqrt{\frac{2(1-\beta)}{(1+\beta)(n-1)}}.
\end{equation}
Then, for any node pair \(u,v \in V\), the following relation holds
\begin{equation}\label{eqimportant1}
 \| \ZZtil(\ee_u - \ee_v) \|^2 \approx_{\frac{7}{3}\beta} \|\BB \LL^{2\dagger}(\ee_u-\ee_v)\|^2.
\end{equation}

\end{lemma}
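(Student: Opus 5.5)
Write $\ZZ(\ee_u-\ee_v)$ and $\ZZtil(\ee_u-\ee_v)$ and compare their norms with the triangle inequality. This reduces the claim to showing that the absolute error $\|(\ZZ-\ZZtil)(\ee_u-\ee_v)\|$ is at most a $\beta/3$ fraction of $\|\ZZ(\ee_u-\ee_v)\|$. Once we have
\[
\bigl|\|\ZZtil(\ee_u-\ee_v)\|-\|\ZZ(\ee_u-\ee_v)\|\bigr|\le \tfrac{\beta}{3}\|\ZZ(\ee_u-\ee_v)\|,
\]
squaring gives $\|\ZZtil(\ee_u-\ee_v)\|^2$ within a factor $(1\pm\beta/3)^2$ of $\|\ZZ(\ee_u-\ee_v)\|^2$. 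Combining this with the hypothesis $(1\pm\beta)$ gives a total factor $(1\pm\beta)(1\pm\beta/3)^2$. For $\beta\le 1/3$ this lies within $1\pm\frac{7}{3}\beta$: expanding, $(1+\beta)(1+\beta/3)^2\le 1+\frac53\beta+\frac79\beta^2+\frac{\beta^3}{9}$, and the higher-order terms are absorbed using $\beta\le1/3$. The lower side is similar.

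\textbf{Upper bound on the error.} For each row $i$, $|(\zz_i-\zztil_i)^\top(\ee_u-\ee_v)|$ is a difference of two coordinates of $\zz_i-\zztil_i$. Both vectors lie in the space orthogonal to $\one$ (after projecting if needed). Using $\frac{1}{2n^4}\LL\LL^\dagger\preceq\LL$, or more directly $\lambda_2\ge$ a polynomial lower bound, one gets $|x_u-x_v|^2\le R_{uv}\,\|x\|_\LL^2\le n\|x\|_\LL^2$, since resistance distances are at most $n-1$. Summing over $i$ with \eqref{laplacain eq} gives
\[
\|(\ZZ-\ZZtil)(\ee_u-\ee_v)\|^2\le (n-1)\,\delta'^2\sum_{i}\|\LL^\dagger\yy_i\|_\LL^2.
\]
Now $\sum_i\|\LL^\dagger\yy_i\|_\LL^2=\sum_i\yy_i^\top\LL^\dagger\yy_i=\trace{\QQ\BB\LL^\dagger\BB^\top\QQ^\top}$, and this should be compared with $\sum_i\|\zz_i\|^2$-type quantities. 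The natural route is to write it as a sum over node pairs. Rows $\yy_i=\BB^\top\QQ^\top$ column $i$ satisfy $\yy_i=\LL^2\zz_i$, so $\yy_i^\top\LL^\dagger\yy_i=\zz_i^\top\LL^3\zz_i\le n^2\,\zz_i^\top\LL\zz_i$. Since $\zz_i\perp\one$, we also have $\zz_i^\top\LL\zz_i\le n\|\zz_i\|^2=\sum_{u<v}(\zz_{i,u}-\zz_{i,v})^2$. Hence $\sum_i\|\LL^\dagger\yy_i\|_\LL^2\le n^2\sum_{u<v}\|\ZZ(\ee_u-\ee_v)\|^2$. Each term is at most $(1+\beta)\|\BB\LL^{2\dagger}(\ee_u-\ee_v)\|^2$.

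\textbf{Main obstacle: the relative bound.} The error bound above is in terms of a sum over all pairs, but we need it relative to the single pair $(u,v)$. This is the crux. I would lower bound $\|\BB\LL^{2\dagger}(\ee_u-\ee_v)\|^2=\bb_{uv}^\top\LL^{3\dagger}\bb_{uv}\ge 1/n^3\cdot\|\bb_{uv}\|^2=2/n^3$, using $\lambda_n\le n$. I would then upper bound each pair term, and hence the sum, polynomially in $n$ via $\lambda_2$-type bounds, absorbing everything into the $n^2$ and $\sqrt{2/(n-1)}$ factors in \eqref{delta'}. The factor $\sqrt{2(1-\beta)/((1+\beta)(n-1))}$ suggests exactly that the $(1+\beta)$ comes from the hypothesis upper bound, the $(1-\beta)$ from the hypothesis lower bound for the pair $(u,v)$, and the $2/(n-1)$ from $\|\bb_{uv}\|^2=2$ against the $n-1$ coming from the resistance bound.

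I expect the exponent bookkeeping to be delicate. If the crude chain does not close with $n^2$, I would instead bound $\|\LL^\dagger\yy_i\|_\LL$ directly by $\|\zz_i\|_{\LL}\cdot\lambda_n(\LL)$ and use that $\LL$-norms of differences are controlled pairwise. Once the bound $\|(\ZZ-\ZZtil)(\ee_u-\ee_v)\|^2\le\frac{\beta^2}{9}\|\ZZ(\ee_u-\ee_v)\|^2$ is in hand, the algebra in the first paragraph finishes the proof.
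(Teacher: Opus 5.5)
\textbf{The gap is the bound on $S$.} Several parts of your proposal are sound:
\begin{itemize}
\item the reduction to $\bigl|\|\ZZtil(\ee_u-\ee_v)\|-\|\ZZ(\ee_u-\ee_v)\|\bigr|\le\frac{\beta}{3}\|\ZZ(\ee_u-\ee_v)\|$;
\item the closing algebra;
\item the lower bound $\|\ZZ(\ee_u-\ee_v)\|^2\ge 2(1-\beta)/n^3$;
\item the per-pair estimate $|x_u-x_v|^2\le R_{uv}\|x\|_{\LL}^2\le (n-1)\|x\|_{\LL}^2$, which is a clean substitute for the paper's path-plus-Cauchy--Schwarz step.
\end{itemize}
What fails is the bound on $S=\sum_i\|\LL^{\dagger}\yy_i\|_{\LL}^2$.

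Combine your three estimates with the stated $\delta'$. This gives $\|(\ZZ-\ZZtil)(\ee_u-\ee_v)\|^2/\|\ZZ(\ee_u-\ee_v)\|^2\le \beta^2 S/(9n(1+\beta))$. So the argument closes only if $S\le(1+\beta)n$.

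Your chain through $\zz_i$ instead yields $S\le n^2\sum_{u<v}\|\ZZ(\ee_u-\ee_v)\|^2\le(1+\beta)\,n^3\,\trace{\LL^{3\dagger}}$. This depends on the spectrum. On a path, $\lambda_2=\Theta(n^{-2})$, so the bound is $\Theta(n^9)$.

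Since $\delta'$ is fixed by the statement, there is nothing to ``absorb'' this into. Invoking $\lambda_2$-type bounds only proves a weaker lemma, with $\delta'$ shrunk by a large polynomial factor. Your fallback (bounding by $\lambda_n\|\zz_i\|_{\LL}$ and controlling edge terms) gives $S\le(1+\beta)n^2\trace{\LL^{2\dagger}}$, which fails the same way. The underlying problem is that routing $S$ through $\zz_i=\LL^{2\dagger}\yy_i$ and back via $\LL^3\preceq n^2\LL$ loses powers of $\lambda_n/\lambda_2$.

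\textbf{The missing idea is to bound $S$ directly, without passing through $\ZZ$.} Because $\BB\LL^{\dagger}\BB^\top$ is an orthogonal projection,
\[
S=\trace{\QQ\BB\LL^{\dagger}\BB^\top\QQ^\top}=\sum_{e\in E}\|\QQ\BB\LL^{\dagger}\bb_e\|^2\le(1+\beta)\sum_{e\in E}R_e=(1+\beta)(n-1).
\]
The inequality uses the Johnson--Lindenstrauss guarantee for $\QQ$ on the resistance-embedding vectors $\BB\LL^{\dagger}\bb_e$, whose squared norms are $R_e$. The last equality is Foster's theorem, $\sum_{e\in E}R_e=\trace{\LL^{\dagger}\LL}=n-1$.

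This is the paper's route. It is where the factor $(1+\beta)(n-1)$ in $\delta'$ comes from, not from the hypothesis on $\ZZ$ or from $R_{uv}\le n-1$. The $n^2$ in $\delta'$ pays for your per-pair factor together with the $n^3$ from the denominator.

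This step uses a JL property of the random $\QQ$ on a second family of vectors. The stated hypothesis on $\ZZ$ does not literally supply it; the paper uses it implicitly.

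With $S\le(1+\beta)(n-1)$, your own estimates give squared relative error at most $(n-1)\beta^2/(9n)\le\beta^2/9$. The rest of your argument then goes through unchanged.
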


To ensure Lemma~\ref{2laplacian} holds, we should carefully select \(\delta\) for the Laplacian solver. By substituting \(\delta' = 2n^4 (\delta^2 + 2\delta)\) from Lemma~\ref{2laplacian}, we obtain the inequality
$\delta' = 2n^4 (\delta^2 + 2\delta) \leq \frac{\beta}{3n^2} \sqrt{\frac{2(1-\beta)}{(1+\beta)(n-1)}}$. This allows us to refine \(\delta\) as 
$
\delta \leq \left( \frac{\beta}{6n^6} \sqrt{\frac{2(1-\beta)}{(n-1)(1+\beta)}} + 1 \right)^{1/2} - 1.
$

By setting \(\delta\) accordingly, we can have the approximation in Lemma~\ref{lem:error}. Although Lemma~\ref{solver} ensures that we can estimate \(c(e)\) in \(\tilde{O}(m/\beta^2)\) time, evaluating \(c(e)\) for all edges in the candidate edge set \(Q\) remains computationally demanding. To address this, we introduce a pruning technique to reduce the size of the candidate edge set.

\subsection{Pruning Technique}

In this subsection, we utilize a pruning technique based on the convex hull of a set of points~\cite{Ro70}. This approach allows us to reduce the candidate edge set \( Q \) to a smaller subset \( P \) with \( P \subset Q \) and \( |P| \ll |Q| \). This reduction improves the efficiency of our algorithm and provides approximate solutions for the optimization problem.

Based on \eqref{eqimportant1}, estimating \( c(e) \) can be reduced to calculating distances between \( O(n^2) \) pairs of points \(\tilde{\ZZ}\ee_i\) in \(\mathbb{R}^t\), where \( i = 1, 2, \ldots, n\). We denote a set \( P = \{p_1, p_2, \ldots, p_n\} \), where each \( p_i = \tilde{\ZZ}\ee_i \). To find the maximum gradient of \( c(e) \), we need to identify the farthest pair of points among \( P \). As distances between interior points are usually smaller than those involving boundary points, we concentrate on identifying the boundary points, which can be efficiently determined using the concept of the convex hull.

\begin{definition}[\cite{Ro70}]\label{convex}
Given a set of \( n \) points \( P = \{p_1, p_2, \ldots, p_n\} \) in \(\mathbb{R}^t\), the convex hull of \( P \) is the minimal convex polytope that contains all the points in \( P \). The boundary of this convex hull is denoted by \( C(P) \), and the subset of points in \( P \) that lie on this boundary is denoted by \(\bar{P}\).
\end{definition}

Below, we explain how to efficiently identify the convex hull of a set of points. For \( P = \{p_1, p_2, \ldots, p_n\} \subset \mathbb{R}^t \), the time complexity of finding the convex hull is \( O(n^{\lfloor t/2 \rfloor}) \) \cite{Ch93, MaOtMaMa08}, which is impractical, even when \( t = \lceil 24 \log(n)/\beta^2 \rceil \). As a result, convex hull approximations are often utilized. Among the available methods, \textsc{FastHull} \cite{AwKaZh18, AwKaZh20, Ka15} offers the lowest time complexity while still providing an error guarantee. It outperforms other approaches, such as linear programming \cite{ruano2015randomized} and the semi-nonnegative matrix factorization method with a kernel trick~\cite{huang2018kernelized}. We choose to use \textsc{FastHull} for approximating the convex hull due to its significantly lower time complexity \cite{AwKaZh20}. The performance of the \textsc{FastHull} algorithm is captured in the following lemma. Let \( d(P) \) denote the diameter of the set \( P \), defined as \( d(P) = \max_{p_i, p_j \in P} \left|\left|p_i - p_j\right|\right|_2 \). This represents the maximum distance between all pairs of points \( p_i \) and \( p_j \) and satisfies the relationship \( d(P) = d(\bar{P}) \).

\begin{lemma}[\cite{AwKaZh20}]\label{convex time}
For a set of points $P=\{p_1,p_2,\ldots,p_n\} \subset \mathbb{R}^t$, and a parameter $\mu\in (0,1)$, the algorithm $\textsc{FastHull}(P,\mu)$ generates an $l$-node subset $\hat{P}$ of $\bar{P}$. The algorithm has a time complexity of $O(nl(t+\mu^{-2}))$ and ensures that the Euclidean distance for any $p\in \bar{P}$ to $C(\hat{P})$ does not exceed $\mu d(P)$.
\end{lemma}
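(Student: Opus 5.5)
Since this lemma is quoted from \cite{AwKaZh20}, the plan is to reconstruct the \textsc{FastHull} procedure and check its three claims: output $\hat P\subseteq\bar P$, the error guarantee, and the running time.

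\textbf{The procedure.} \textsc{FastHull} grows a set $\hat P$ greedily.
\begin{enumerate}
\item Initialise $\hat P$ with one extreme point of $P$, for example the point maximising $\langle p, \mathbf{g}\rangle$ for some direction $\mathbf{g}$. A maximiser of a generic linear functional over $P$ is a vertex of the hull, so it lies in $\bar P$.
\item Each round, for every $p\in P$, approximately compute its distance to $C(\hat P)$ with the Triangle Algorithm (equivalently, Frank--Wolfe on $\min_{x\in \mathrm{conv}(\hat P)}\|p-x\|^2$).
\item If every point is certified to be within $\mu\, d(P)$ of $\mathrm{conv}(\hat P)$, stop.
\item Otherwise some point $p$ yields a \emph{witness}: a point $x\in\mathrm{conv}(\hat P)$ such that the hyperplane orthogonal to $p-x$ through the midpoint separates $p$ from $\mathrm{conv}(\hat P)$. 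Take the direction $\mathbf{w}=p-x$ and add to $\hat P$ the point $\arg\max_{q\in P}\langle q,\mathbf{w}\rangle$.
\end{enumerate}

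\textbf{Correctness.} The output lies in $\bar P$ because each added point maximises a linear functional over $P$, so it is an extreme point; ties can be broken lexicographically to ensure this. The added point is also new, since it strictly beats every point of $\mathrm{conv}(\hat P)$ in direction $\mathbf{w}$. Termination gives the error guarantee. When the Triangle Algorithm run for $p$ finds no witness after $O(\mu^{-2})$ iterations, its standard guarantee gives a point $x'\in\mathrm{conv}(\hat P)$ with $\|p-x'\|\le \mu\, d(P)$. For $p\in\bar P$ this bounds the distance to $\mathrm{conv}(\hat P)$, hence to $C(\hat P)$ in the sense used in the lemma. Since distinct points are added each round, the number of rounds is exactly $l=|\hat P|\le|\bar P|$.

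\textbf{Running time.} This is where I expect the main obstacle. A naive Triangle Algorithm run costs $O(\mu^{-2})$ iterations, each scanning the $l$ current hull points in $\mathbb{R}^t$. Over $n$ points and $l$ rounds that is $O(nl^2 t\mu^{-2})$, far above $O(nl(t+\mu^{-2}))$. I would fix this as follows.
\begin{itemize}
\item Represent every iterate $x$ as a convex combination of $\hat P$.
\item Precompute the inner products $\langle p,q\rangle$ for $p\in P$ and $q\in\hat P$. Each new hull point adds $n$ inner products, costing $O(nt)$ per round and $O(nlt)$ overall.
\item With these, each Triangle-Algorithm step updates $\|p-x\|^2$ and the pivot tests in $O(1)$ amortised time per candidate, using only scalar arithmetic on the Gram entries.
\item Warm-start each point's iterate from the previous round, so that across all $l$ rounds each point spends $O(l\mu^{-2})$ work in total rather than per round.
\end{itemize}
Summing gives $O(nlt+nl\mu^{-2})=O(nl(t+\mu^{-2}))$. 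Making this amortisation rigorous, in particular that warm starts preserve the $O(\mu^{-2})$ iteration bound relative to the fixed scale $d(P)$, is the delicate step. I would follow the potential argument of \cite{AwKaZh20}: the gap $\|p-x\|$ shrinks geometrically in $\mu$-units and never increases when $\hat P$ grows.
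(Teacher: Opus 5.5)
\textbf{This is a gap: the running-time claim is not established by your argument.} The paper does not prove this lemma; it imports it from the AVTA analysis of Awasthi, Kalantari and Zhang. Your correctness part matches that algorithm. The Triangle-Algorithm witness, the strict separation showing that $\arg\max_{q\in P}\langle q,\mathbf{w}\rangle$ is a new extreme point, and termination by certification are all fine. For $p\in\bar P$, the distance to $\mathrm{conv}(\hat P)$ also equals the distance to its boundary $C(\hat P)$, because such a $p$ is not interior to $\mathrm{conv}(\hat P)\subseteq\mathrm{conv}(P)$.

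The running time is one of the three claims, and you leave it open. The difficulty comes from your round-based schedule. Your warm-start potential is sound, though the convergence is not geometric: $1/\|p-x\|^2$ grows additively by at least $1/(4d(P)^2)$ per pivot step and is unchanged when $\hat P$ grows. That gives $O(\mu^{-2})$ pivot steps per point overall, but it only pays for pivot steps. In every round, each still-uncertified point must also re-confirm that its iterate is still a witness, which with your bookkeeping is a full $O(l)$ pivot scan. It must also obtain $\langle x_p,q\rangle$ for the new vertex $q$. Over $n$ points and $l$ rounds this costs $O(nl^2)$, which is not within $O(nl(t+\mu^{-2}))$ when $l\gg t+\mu^{-2}$, and nothing in the lemma rules that out.

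Your scheme can be repaired, but only with two ideas you do not state:
\begin{itemize}
\item An old witness remains a witness against the old vertices, so only $q$ needs testing.
\item Computing $\langle x_p,q\rangle=\sum_i w_i\langle q_i,q\rangle$ costs $O(|\mathrm{supp}\,x_p|)=O(\mu^{-2})$, because each pivot step adds at most one vertex to the support.
\end{itemize}

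The cited analysis needs no cross-round amortisation. It processes points one at a time. The Triangle Algorithm is re-run on the same point after each vertex that its witness causes to be added. The point is discarded permanently once certified, which is legitimate because $\mathrm{conv}(\hat P)$ only grows. Every call therefore ends either by certifying a point or by adding a vertex, so there are at most $n+l$ calls. Each call costs $O(lt)$ for inner products with $\hat P$, plus $O(\mu^{-2})$ iterations of cost $O(l)$ using the incrementally maintained Gram matrix of $\hat P$. Adding $O(nt)$ per new vertex for the argmax gives $O(nl(t+\mu^{-2}))$ directly.

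Separately, your stopping test uses $d(P)$, whose exact computation costs $O(n^2t)$ and would already break the bound. Use $R'=\max_i\|p_i-p_1\|$ instead. It satisfies $d(P)/2\le R'\le d(P)$, costs $O(nt)$, and preserves both the $\mu\, d(P)$ guarantee and the $O(\mu^{-2})$ iteration count.
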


We can obtain an approximate point set \( \hat{P} \subseteq \bar{P} \), the boundary of convex hull $C(P)$ by applying \textsc{FastHull}. It is easy to derive that $d(\hat{P}) \geq (1-2\mu) d(P)$. Since \( (1 - 2\mu)^2 \geq 1 - 4\mu \), we have $
(1 - 4\mu) d(P)^2 \leq d(\hat{P})^2.$
Thus,
\begin{equation}\label{eq1}
 d(\hat{P})^2  \approx_{4\mu} d(P)^2  .
\end{equation}
According to Lemma~\ref{convex time} and \eqref{eq1}, \textsc{FastHull} can efficiently approximate the convex hull for \( P \). When \(t= \lceil24 \log(n)/\beta^2 \rceil \), the time complexity of \(\textsc{FastHull}\) is \(O(nl \log (n) \beta^{-2})\). This approximation greatly reduces computation time, as we only need to query \(l\) times, and finding the furthest point pairs from the set \( \bar{P} \) now only needs \( O(l^2 \log (n) \beta^{-2}) \). Based on the above results, we now propose a fast algorithm, \textsc{ApproxFast}, as detailed in Algorithm~\ref{ApproxFast}.{ The approximation ratio for each iteration is given by the following theorem.
\begin{theorem}
    The gradient of the edge selected at each iteration is an \(\epsilon\)-approximation of the maximum gradient of the objective function.
\end{theorem}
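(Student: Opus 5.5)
The plan is to chain the three approximation layers established earlier: the Johnson--Lindenstrauss projection, the Laplacian solver, and \textsc{FastHull}. Each layer contributes a multiplicative error. I will choose $\beta$ and $\mu$ as functions of $\epsilon$ so that the compounded error stays within $\epsilon$.

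Fix an iteration with current graph $G$ and let $e^*=(u^*,v^*)$ maximize $c(e)=\|\BB\LL^{2\dagger}(\ee_u-\ee_v)\|^2$.

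\textbf{Step 1.} By \eqref{eqimportant} the projected matrix $\ZZ=\QQ\BB\LL^{2\dagger}$ satisfies the hypothesis of Lemma~\ref{lem:error} for every pair, with probability at least $1-1/n$. The solver output has the error bound of Lemma~\ref{2laplacian}, with $\delta$ chosen so that $\delta'$ obeys \eqref{delta'}. Lemma~\ref{lem:error} then gives
\[
\|\ZZtil(\ee_u-\ee_v)\|^2\approx_{\frac73\beta} c(e)
\]
for all pairs $(u,v)$ simultaneously. A union bound over the $t$ solver calls and the JL event keeps the failure probability small.

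\textbf{Step 2.} Apply \textsc{FastHull} to $P=\{\ZZtil\ee_i\}$. Writing $\tilde c(e)=\|\ZZtil(\ee_u-\ee_v)\|^2$, we have $\max_e \tilde c(e)=d(P)^2$. The algorithm returns the farthest pair $(\hat u,\hat v)$ in $\hat P$, and by \eqref{eq1} this pair satisfies $\tilde c(\hat u,\hat v)=d(\hat P)^2\ge(1-4\mu)d(P)^2$.

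\textbf{Step 3.} Chain the inequalities:
\[
c(\hat e)\ \ge\ \frac{\tilde c(\hat e)}{1+\frac73\beta}\ \ge\ \frac{(1-4\mu)\,\tilde c(e^*)}{1+\frac73\beta}\ \ge\ \frac{(1-4\mu)(1-\frac73\beta)}{1+\frac73\beta}\,c(e^*).
\]
The upper bound $c(\hat e)\le c(e^*)$ holds trivially because $e^*$ is the maximizer. Choosing, for instance, $\beta=\epsilon/7$ and $\mu=\epsilon/12$ (with $\epsilon$ small, so that $\beta\le 1/3$) makes the factor at least $1-\epsilon$. With these choices the runtime stays as claimed, since $t=O(\log n/\epsilon^2)$.

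\textbf{Main obstacle.} The main subtlety is Step 2. The set $\hat P$ may contain pairs that are already edges of $G$, whereas the maximum of $c$ is taken over $Q$. I would handle this in one of two ways. The first is to note, as the paper implicitly does, that the theorem concerns approximating the maximal gradient value $d(P)^2$. The second is to argue that the algorithm takes the best nonexistent pair among hull points, and that the hull guarantee still controls the diameter restricted to $Q$. The second would require an extra assumption, so I would first try to state the result relative to the farthest-pair maximum.

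A secondary point is the translation from Lemma~\ref{convex time} to \eqref{eq1}: the diameter endpoints lie within $\mu d(P)$ of $C(\hat P)$, and the maximum distance over a polytope is attained at its vertices. This is cited in the paper, so I would simply invoke it.
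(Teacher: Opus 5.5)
Your argument is correct and uses the same two ingredients as the paper's proof: Lemma~\ref{lem:error}, which gives $\tilde c\approx_{\frac73\beta}c$ on all pairs, and \eqref{eq1}, which gives $d(\hat P)^2\ge(1-4\mu)d(P)^2$. It proves a different, stronger conclusion, though.

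The paper only shows that the \emph{estimated} value of the selected pair, $d(\hat P)^2=\tilde c(\hat e)$, satisfies $d(\hat P)^2\approx_{\frac73\beta+4\mu}\max c$. It combines the two errors additively, which is legitimate: $(1-a)(1-b)\ge 1-a-b$ handles the lower side and $d(\hat P)\le d(P)$ handles the upper side. It then plugs in the algorithm's own $\beta=\frac{3}{14}\epsilon$, $\mu=\epsilon/8$.

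You bound the \emph{true} gradient $c(\hat e)$ of the edge actually added, which is the more meaningful guarantee. The cost is a third factor $(1+\frac73\beta)^{-1}$ when you convert $\tilde c(\hat e)$ back to $c(\hat e)$. That is why your constants have to shrink. With Line~1 of Algorithm~\ref{ApproxFast} as written, your chain yields only $(1-\epsilon/2)^2/(1+\epsilon/2)\approx 1-\frac32\epsilon$, which is below $1-\epsilon$ for every $\epsilon<2/3$. So your version holds for a re-parameterized \textsc{ApproxFast} with the same asymptotic runtime, while the paper's weaker version holds for the algorithm as stated. Your Steps~1--2 already contain the paper's version if you want to match it exactly.

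The obstacle you flag is genuine, and the paper's proof passes over it silently. It establishes the bound against $\max_{u,v\in V}\|\BB\LL^{2\dagger}(\ee_u-\ee_v)\|^2$ and then writes $\max_{e\in Q}c(e)$, even though Line~14 does not exclude existing edges. Stating the result relative to the all-pairs maximum, as you propose, is what the argument actually supports. The lower bound against $\max_{e\in Q}c(e)$ follows a fortiori, but $\hat e\in Q$ and the matching upper bound do not.
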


\begin{proof}
    By Lemma~\ref{lem:error}, we have
    $
    d(P)^2  \approx_{\frac{7}{3}\beta}  \max_{u,v \in V} \|\BB \LL^{2\dagger}(\ee_u - \ee_v)\|^2.
    $
    Combining this result with \eqref{eq1}, it follows that
   $
    d(\hat{P})^2 \approx_{\frac{7}{3}\beta + 4\mu}  \max_{u,v \in V} \|\BB \LL^{2\dagger}(\ee_u - \ee_v)\|^2.
    $
    Setting \(\beta = \frac{3}{14}\epsilon\) and \(\mu = \epsilon/8\), we obtain 
    $
    d(\hat{P})^2 \approx_{\epsilon}  \max_{e \in Q} c(e)$,
    which implies that the gradient of the edge selected in each iteration provides an \(\epsilon\)-approximation of the maximum gradient of the objective function.
\end{proof}

The time complexity of Algorithm~\ref{ApproxFast} is summarized in Theorem~\ref{performance}.
\begin{theorem}\label{performance}
    Given a positive integer $k$ and an error parameter $\epsilon \in(0,1)$, the runtime of Algorithm~\ref{ApproxFast} is in $\tilde{O}(k(nl+m)/\epsilon^2).$ 
\end{theorem}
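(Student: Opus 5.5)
The plan is to account for the cost of one iteration of Algorithm~\ref{ApproxFast} and multiply by $k$. I assume the iteration has four phases: build the random projection, solve the Laplacian systems, run \textsc{FastHull}, and search for the farthest pair in the hull. After the edge is selected, only the graph (and hence $\LL$ and $\BB$) is updated; the pseudoinverses are never maintained explicitly, so the update costs $O(1)$ amortized, or $O(m)$ if the graph is rebuilt. Throughout I fix $\beta=\frac{3}{14}\epsilon$ and $\mu=\epsilon/8$, as in the preceding theorem, so that $t=\lceil 24\log(n)/\beta^2\rceil=O(\log(n)/\epsilon^2)$ and $\mu^{-2}=O(\epsilon^{-2})$.

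\textbf{Projection.} Computing $\QQ\BB$ for a $t\times m$ sign matrix $\QQ$ costs $O(tm)=O(m\log(n)/\epsilon^2)$, because $\BB$ has $2m$ nonzeros and we can compute $\QQ\BB$ as $(\BB^\top\QQ^\top)^\top$.

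\textbf{Laplacian solves.} For each of the $t$ rows $\yy_i$ we call \SDDMSolver{} twice, as in Lemma~\ref{2laplacian}, with the $\delta$ chosen after Lemma~\ref{lem:error}. Each call runs in $\tilde{O}(m\log(1/\delta))$. The main point to check is that this $\delta$ depends polynomially on $n$ and $\epsilon$ (roughly $\delta=\Theta(\beta n^{-6.5})$), so $\log(1/\delta)=O(\log n+\log(1/\epsilon))$. That factor is absorbed into $\tilde{O}$, treating the $\log(1/\epsilon)$ term as dominated by or absorbed alongside the $\epsilon^{-2}$ factor. I expect this to be the only delicate step: Lemma~\ref{solver} as stated hides the $\log(1/\delta)$ dependence, so I will invoke the standard $\tilde O(m\log(1/\delta))$ form of the solver from \cite{SpTe14,CoKyMiPaJaPeRaXu14}. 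Together with mean-centering to enforce $\one^\top\zz_i=0$, which costs $O(n)$ per row, this phase costs $\tilde O(tm)=\tilde O(m/\epsilon^2)$.

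\textbf{Convex hull and pair search.} Forming the points $p_j=\ZZtil\ee_j$ costs $O(nt)$. By Lemma~\ref{convex time}, \textsc{FastHull}$(P,\mu)$ costs $O(nl(t+\mu^{-2}))=O(nl\log(n)/\epsilon^2)$. The farthest pair among the $l$ hull points is found in $O(l^2t)$, which is at most $O(nlt)$ since $l\le n$. Selecting the edge and updating $E$ and $Q$ is negligible.

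Summing, one iteration costs $\tilde O((nl+m)/\epsilon^2)$, and over $k$ iterations the total is $\tilde O(k(nl+m)/\epsilon^2)$, where $l$ is the largest hull size over all iterations. One subtlety: if the farthest pair is an existing edge, we can take the best nonadjacent pair among the $l$ hull points, which stays within the $O(l^2t)$ bound.
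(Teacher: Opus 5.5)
Your proposal is correct and follows the same per-iteration accounting as the paper's proof: projection in $O(tm)$, $2t$ solver calls at $\tilde O(m)$ each, \textsc{FastHull} in $O(nl\log(n)/\epsilon^2)$, and the farthest-pair search in $O(l^2t)$, all multiplied by $k$. Your treatment of the solver's $\log(1/\delta)$ dependence is more careful than the paper, which simply uses the $\tilde O(m)$ bound of Lemma~\ref{solver}; but the leftover $\log(1/\epsilon)$ factor is absorbed only if $\tilde O$ also hides $\mathrm{poly}\log(1/\epsilon)$ factors (or $\epsilon\ge 1/\mathrm{poly}(n)$), not because it is dominated by $\epsilon^{-2}$.
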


The components, Laplacian solver and \textsc{FastHull}, of \textsc{ApproxFast} play a mutually complementary role in the algorithm. Removing either part would significantly increase the time complexity, making it difficult to scale the algorithm to large networks. If only the Laplacian solver is used, the complexity of each iteration becomes $O(n^2)$; if only FastHull is used, the complexity of each iteration becomes $O(n^3)$. So we did not conduct ablation studies in the subsequent experimental section.}

\begin{algorithm}[t]
	\caption{\textsc{ApproxFast}$(G,\epsilon,k)$}\label{ApproxFast}
		\Input{Graph $G=(V,E)$; a parameter $\epsilon$; an integer $k$ }
		\Output{ An edge set $T \subseteq Q$ of size $k$
		}
             $\mu = \epsilon/8$,  $\beta= \frac{3}{14}\epsilon$, $\delta = \left( \frac{\beta}{6n^6} \sqrt{\frac{2(1-\beta)}{(n-1)(1+\beta)}} + 1 \right)^{1/2} - 1 $\;
		 $T = \emptyset$,  $\LL \gets$ the Laplacian matrix of graph $G$ \;
            $t=\lceil 24\log(n)/\beta^2\rceil $ \;
           
            Generate random Gaussian matrices $\QQ \in \mathbb{R}^{t\times m}$\;
            \For{$i=1$ to $t$}{ 
            $\yy^{\top}_i $=the $i$-th row of $\QQ \BB$  \\}

		\For{$i = 1$ to $k$}{
                \For{$j=1$ to $t$}{
              $\zztil^{(1)}_j=\SDDMSolver(\LL, \yy_j, \delta)$\\
              $\zztil^{(2)}_j=\SDDMSolver(\LL, \zztil^{(1)}_j, \delta)$\\
	       $\tilde{\zz}_j=\zztil^{(2)}_j$\\
        }
                ${P}=\{{p}_1,{p}_2,\ldots,{p}_n\}$ with each point being $\tilde{\ZZ}_{:,1},\tilde{\ZZ}_{:,2},\ldots,\tilde{\ZZ}_{:,n}$\;
                $\hat{{P}}=\textsc{FastHull}({P},\mu)$ \;
                $(x,y)\leftarrow\arg\max_{u,v\in \hat{{P}}}\|\tilde{\ZZ}_{:,u}-\tilde{\ZZ}_{:,v}\|_2^2$\;
             $T\leftarrow T\cup\{(x,y)\}$, $G\leftarrow G(V,E\cup\{(x,y)\})$\;
                Update $\LL$\;
		}
		\Return $T$
\end{algorithm}

\section{Experiments}\label{experimen}
\subsection{Experimental Setup}
\textbf{Datasets. } We use 10 real-life networks from the Network Repository~\cite{RoAh15},  SNAP~\cite{LeSo16} and KONECT~\cite{kunegis2013konect}. Our experiments focus on the largest connected component (LCC) of these networks.

\textbf{Environment and Implementation.} All our experiments are conducted on a Linux server with a 2.10 GHz CPU and 128 GB of memory, with a single thread. We set $\epsilon=0.4$. All the proposed algorithms are implemented in \textit{Julia}, where the Laplace Solver is used from~\cite{KySa16}. 

 \textbf{Total Biharmonic Distance Computation.} For small graphs, we are able to compute the exact total biharmonic distance via computing $\LL^\dagger$. {For medium and large networks, it is difficult to obtain the exact total biharmonic distance. So we do not perform an analysis of the actual approximation error. Instead, we compare the values of the total biharmonic distance returned by different algorithms—smaller values indicate smaller approximation errors.} To efficiently compute the total biharmonic distance, we utilize Hutchinson's Monte-Carlo method~\cite{Hut89} and the Laplacian solver~\cite{SpTe14,CoKyMiPaJaPeRaXu14} to approximate the total biharmonic distance with a provable guarantee, adhering to the methodology in~\cite{ShYiZh18}. 

\textbf{Edge Addition Strategies.} The set of $k$ edges is selected by different strategies: \textsc{Optimal}, \textsc{Random}, \textsc{ACGreedy}, \textsc{DeterDiff}, \textsc{DeterGrad}, and \textsc{ApproxFast}. \textsc{Optimal} selects the $k$ edges that minimize $B(G)$ via brute-force search. As we are the first to solve Problem~\ref{problem}, and no existing methods are available for direct comparison, we include several heuristic baselines. \textsc{Random} selects $k$ edges uniformly at random. {\textsc{ACGreedy} selects $k$ edges that yield the greatest increase in algebraic connectivity~\cite{GhBo06}, defined as the second smallest eigenvalue of $\LL$. This method is based on the Fiedler vector, which is computed efficiently via the Lanczos algorithm~\cite{demmel1997applied}. We choose this method as a baseline because $B(G)$ is primarily influenced by the small eigenvalues of $\LL$, as shown in Definition~\ref{def:total biharmonic distance}.} In subsequent figures, we use $B(G)/n$ instead of $B(G)$.

{Details including dataset statistics, and parameter sensitivity analysis are provided in Appendix~\ref{appendix:expdetails}.}

\subsection{Effectiveness Comparison}

We begin by evaluating the effectiveness of our algorithms by comparing them with \textsc{Optimum}, \textsc{Random} and \textsc{ACGreedy}. To this end, we conduct experiments on four small networks: \textsf{Tribes} (16 nodes, 58 edges), \textsf{Karate} (34 nodes, 78 edges), \textsf{Dolphins} (62 nodes, 159 edges), and \textsf{Moreno} (64 nodes, 243 edges). Since these networks are small, we are able to obtain the optimal solution through exhaustive search. For each network, we select $k = 1, 2, \dots, 5$ edges. Figure~\ref{fig:2} illustrates how $B(G)/n$ changes as $k$ increases for each algorithm. 

The following observations can be made. First, the values returned by three greedy algorithms are almost identical to \textsc{Optimum}, with the four curves overlapping, indicating that \textsc{DeterDiff} significantly outperforms its theoretical guarantee. \textsc{DeterGrad} and \textsc{ApproxFast} demonstrate effective performance in practice. Second, our proposed algorithms all perform better than \textsc{Random} and \textsc{ACGreedy}. To further demonstrate the effectiveness of our algorithms, we compare their performance with \textsc{Random} and \textsc{ACGreedy} on four moderately larger real-world networks, each with fewer than 10,000 nodes. These networks are large enough to make \textsc{Optimum} impractical. For each network, Figure~\ref{exp1} shows the performance of each algorithm as $k$ increases from 1 to 10. {
All of our proposed algorithms perform significantly better than \textsc{ACGreedy}.} Notably, \textsc{DeterGrad} achieves results that are comparable to the standard greedy algorithm \textsc{DeterDiff}, suggesting that selecting the edge with the largest gradient is a highly effective strategy for Problem~\ref{problem}. Additionally, our fast approximation algorithm, \textsc{ApproxFast}, yields similar results, further demonstrating its effectiveness.

\begin{figure}[t]
		\centering
		\includegraphics[width=1\linewidth]{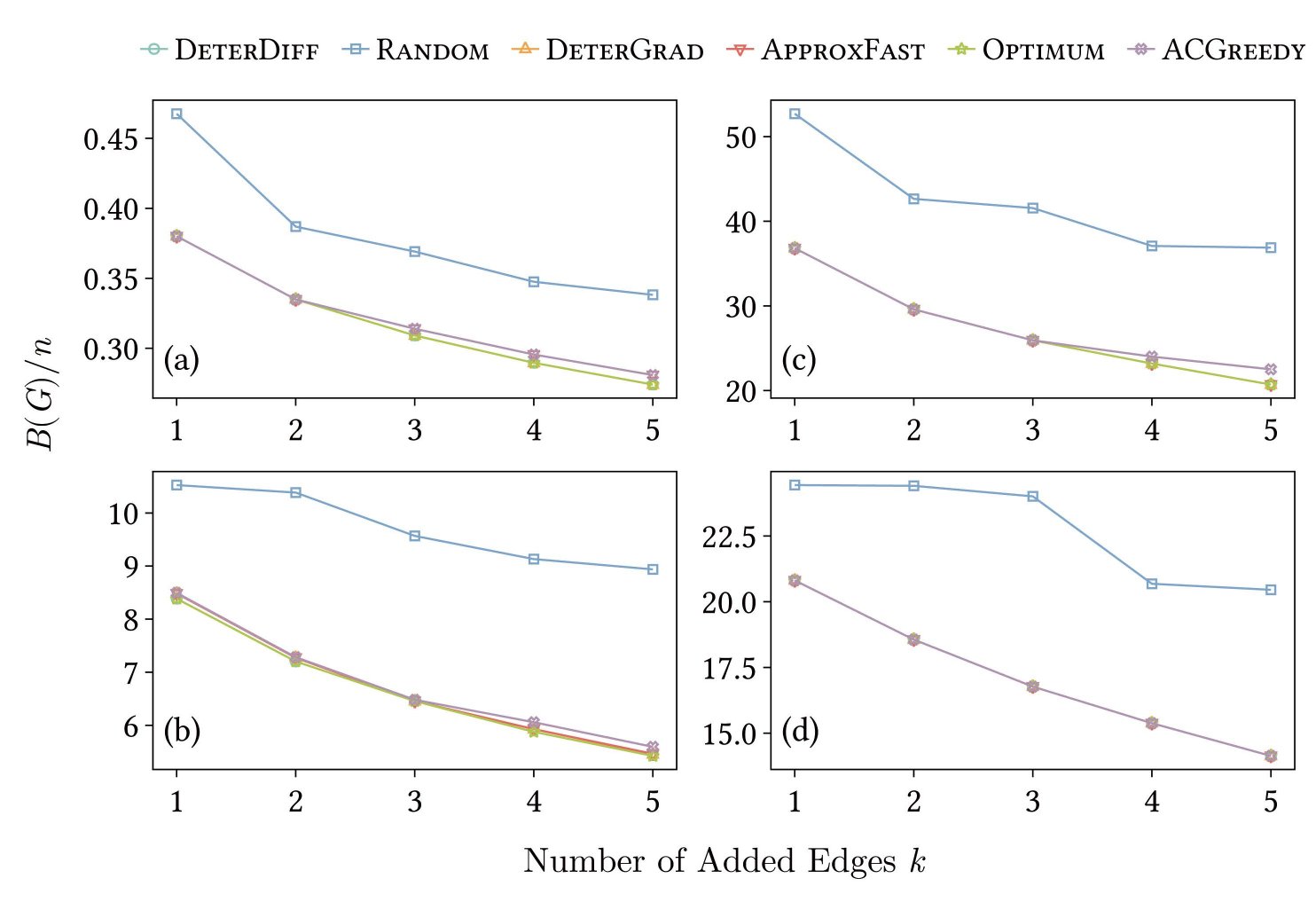}
		\caption{$B(G)/n$ returned by six algorithms on four small networks: \textsf{Tribes} (a), \textsf{Karate} (b), \textsf{Dolphins} (c), and \textsf{Moreno} (d).
        \label{fig:2}}
\end{figure}

\begin{figure}[t]
		\centering
		\includegraphics[width=1\linewidth]{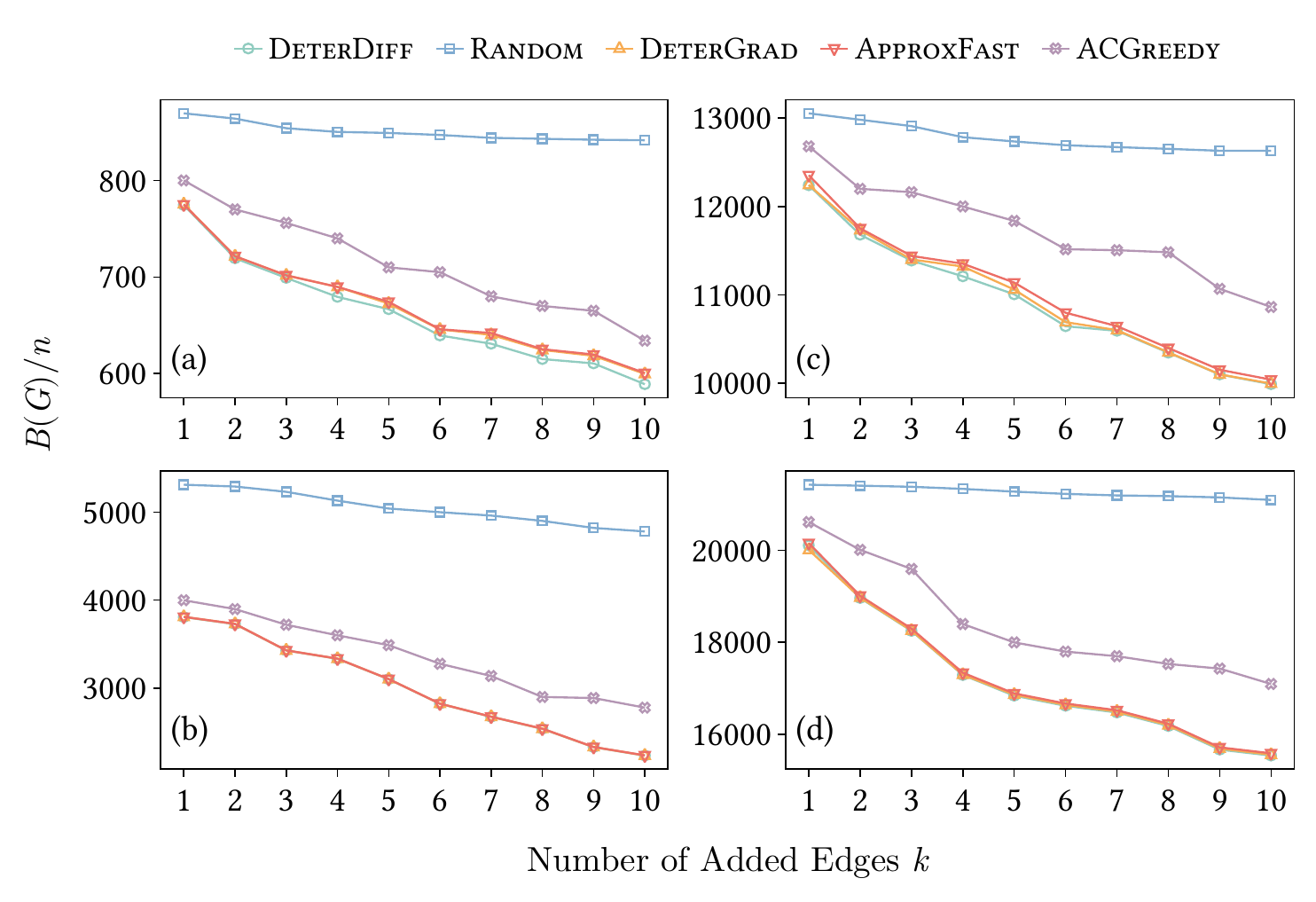}
		\caption{$B(G)/n$ returned by five algorithms on four networks:
  \textsf{Hamester} (a), \textsf{Facebook} (b), \textsf{Grqc} (c), \textsf{Hepth} (d). \label{exp1}}
\end{figure}

\begin{figure}[t]
		\centering
		\includegraphics[width=1\linewidth]{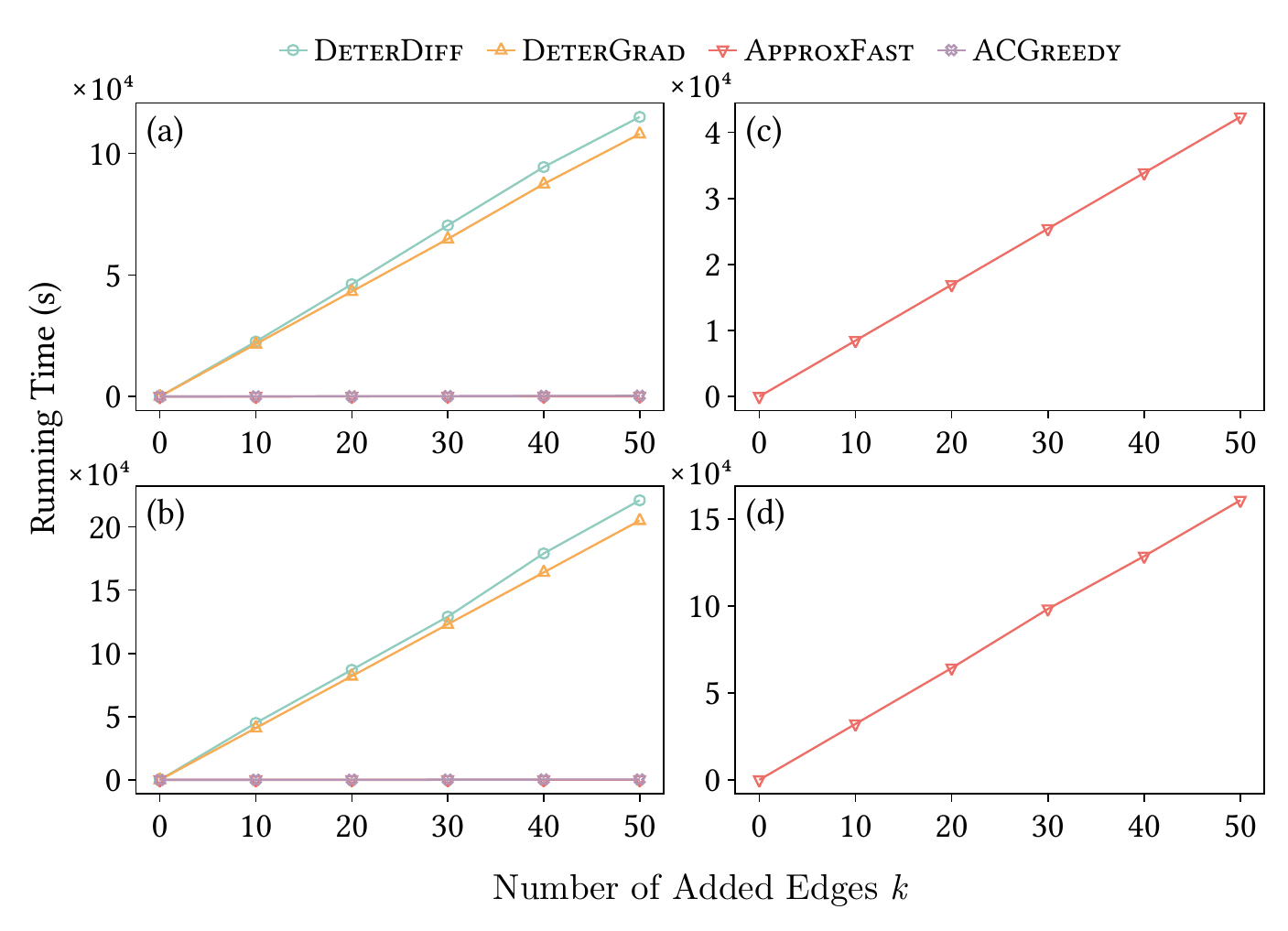}
		\caption{Running time of different algorithms for $k=10,20,\ldots,50$ on four networks: \textsf{Facebook} (a), \textsf{Hepth} (b), \textsf{YoutubeSnap} (c), and \textsf{Livejournal} (d).  \label{f4}}
\end{figure}

\subsection{Efficiency Comparison}

Although our three greedy algorithms perform remarkably well in terms of effectiveness, we will demonstrate that \textsc{ApproxFast} is significantly faster than \textsc{DeterDiff}, \textsc{DeterGrad} and \textsc{ACGreedy}. To illustrate this, we compare the efficiency of these algorithms. Figure~\ref{f4} presents the running time of these algorithms across various networks. For each network, we select $k=50$ edges. Notably, for both \textsf{Facebook} and \textsf{Hepth} networks, consisting of 4,039 and 8,638 nodes respectively,  \textsc{DeterGrad} and \textsc{DeterDiff} take over 30 hours to select $k=50$ edges, making them impractical for networks larger than 10,000 nodes. {
Due to the extreme slowness of \textsc{DeterGrad} and \textsc{DeterDiff}, the curves of \textsc{ACGreedy} and \textsc{ApproxFast} appear to overlap for \textsf{Facebook} and \textsf{Hepth}. However, \textsc{ACGreedy} is in fact much slower than \textsc{ApproxFast}. For $k = 50$, \textsc{ACGreedy} takes 295 seconds on \textsf{Facebook} and 332 seconds on \textsf{Hepth}, while \textsc{ApproxFast} completes in only 118 and 110 seconds, respectively. On the larger \textsf{Douban} network with 154,908 nodes, \textsc{ACGreedy} takes over 10 hours for $k = 50$, whereas \textsc{ApproxFast} finishes within 1 hour. This makes \textsc{ACGreedy} unsuitable for networks with over 150,000 nodes. As a result, \textsc{DeterDiff}, \textsc{DeterGrad}, and \textsc{ACGreedy} are omitted from Figure~\ref{f4} for the \textsf{YoutubeSnap} and \textsf{LiveJournal} networks (Figures~\ref{f4}(c) and (d)), which have over 1 million nodes, as they could not complete within a reasonable time frame.} In stark contrast, \textsc{ApproxFast} shows remarkable efficiency, completing in just 2 minutes on both the \textsf{Facebook} and \textsf{Hepth} networks with $k=50$ edges. For \textsf{YoutubeSnap}, which has more than 1 million nodes and 2 million edges, \textsc{ApproxFast} requires only 11 hours, and for the largest network, with over 4 million nodes and 27 million edges, \textsc{ApproxFast} completes in 44 hours, both with $k=50$. These results underscore the scalability and practicality of \textsc{ApproxFast}.

{Finally, we point out that since the optimal edge selection strategy depends on the optimization objective, and thus a specific edge selection strategy is required for a specific problem. Taking the minimization of the leading eigenvalue of a non-backtracking matrix as an example, if the edges are selected based on the reduction of the spectral radius of the adjacency matrix and the non-backtracking matrix, the resulting edge sets are very different~\cite{zhang2021minimizing}. Therefore, edge selection strategies are problem-dependent. \textsc{ACGreedy} is specially designed to maximize algebraic connectivity. From our experiments, \textsc{ACGreedy} performs significantly worse than our customized methods and fails to identify a more effective edge set.}

\section{Related Work}

\textbf{Resistance Distance Computation.} 
Various algorithms have been developed for estimating resistance distances in large graphs~\cite{HaAkYo16, PeLoYo21, LiLiDa23, YaTa23}. Early methods used random projection~\cite{SpSr08}, while subsequent approaches improved efficiency through random walk sampling and random spanning tree formulations~\cite{PeLoYo21}. More recent work introduced Monte Carlo techniques for faster random walk-based estimations~\cite{YaTa23} and landmark-based sampling methods~\cite{LiLiDa23, liao2024efficient}. Despite extensive studies on resistance distance, its variants, such as biharmonic distance, remain less understood.

\textbf{Biharmonic Distance Computation.} 
Originally introduced by Lipman et al.~\cite{lipman2010biharmonic}, biharmonic distance captures global structural information and is more effective than resistance distance in network influence mining and graph embedding~\cite{YiShLiZh18, kreuzer2021rethinking, black2023understanding}. Several efficient algorithms for biharmonic distance computation have been proposed~\cite{Yi2022BiharmonicDP, zhang2020fast,liu2024fast}, but no methods have been developed to minimize total biharmonic distance through edge addition.

\textbf{Optimization via Edge Addition.} 
{Edge addition has been studied for optimizing various objectives~\cite{xu2023minimizing,then2014more,CrDaSeVe16}, such as increasing the number of spanning trees~\cite{LiPaYi19}, reducing random walk hitting times~\cite{adriaens2023minimizing}, improving centrality measures~\cite{ShYiZh18,medya2018group,DaOlSe19,LiGaWuRoWeQi21}, and minimizing total shortest path distances~\cite{meyerson2009minimizing}.} Other objectives, which are more closely related to network connectivity and robustness, include reducing the network diameter~\cite{FrGaGuMa15,AdGi22}, bolstering algebraic connectivity~\cite{GhBo06}, decreasing the Kirchhoff index~\cite{GhBoSa08,PrKoMe22,kooij2023minimizing}, and increasing robustness~\cite{chan2016optimizing}.
We are the first to propose our problem. None of the existing approaches can be directly translated to the problem of minimizing total biharmonic distance through edge addition. Furthermore, existing approaches may face challenges when applied to large-scale networks. Our work addresses this gap by introducing novel algorithms specifically designed for this challenging optimization problem.

\section{Conclusion}
In this paper, we studied the problem of minimizing the total biharmonic distance in a network through link recommendation. We first demonstrated the monotonicity of the total biharmonic distance, showing that adding new edges to the graph decreases the total biharmonic distance. Based on this property, we formulated a combinatorial optimization problem with broad applicability, including network robustness maximization. The problem is combinatorially complex. We established that while the objective function of the problem is monotone, it is not supermodular. To tackle the problem, we employed greedy heuristics. The first algorithm, \textsc{DeterDiff}, has a provable bound of approximation factor and cubic time complexity. The second algorithm, \textsc{DeterGrad}, uses the gradient rather than the marginal decrease, showing selecting the edge with the largest gradient is an effective strategy. We then leveraged the projection technique, the Laplacian solver, and the convex hull approximation to develop an algorithm with nearly linear time complexity and an error guarantee. Finally, we performed experiments on ten real networks, showing that \textsc{DeterDiff}, \textsc{DeterGrad}, and \textsc{ApproxFast} exhibit comparable effectiveness, with \textsc{ApproxFast} being the most efficient. \textsc{ApproxFast} demonstrates scalability in large graphs containing over 4 million nodes. Future work includes extending our algorithm for other network optimization problems.

\section{Acknowledgments}

The work was supported by the National Natural Science Foundation of China (Nos. 62372112 and 61872093).


\providecommand{\noopsort}[1]{}\providecommand{\singleletter}[1]{#1}

\appendix
\section{APPENDIX}~\label{appendix}
In this section, we give proofs of Lemma~\ref{lem:error} and Theorem~\ref{performance}. We also include additional experimental details.

\subsection{Proof of Lemma~\ref{lem:error}}

\begin{proof}In order to prove the lemma, it suffices to show that for any arbitrary pair of nodes $u$ and $v$,
\begin{equation}
\begin{aligned}\label{13}
&\left|\|\boldsymbol{Z}(\boldsymbol{e}_u-\boldsymbol{e}_v)\|^2-\|\ZZtil(\boldsymbol{e}_u-\boldsymbol{e}_v)\|^2\right| \\=& \left|\|\boldsymbol{Z}(\boldsymbol{e}_u-\boldsymbol{e}_v)\|-\|\ZZtil(\boldsymbol{e}_u-\boldsymbol{e}_v)\|\right| \times  \left|\|\boldsymbol{Z}(\boldsymbol{e}_u-\boldsymbol{e}_v)\|+\|\ZZtil(\boldsymbol{e}_u-\boldsymbol{e}_v)\|\right|\\
\leq & \left(\frac{2\beta}3+\frac{\beta^2}9\right)\|\boldsymbol{Z}(\boldsymbol{e}_u-\boldsymbol{e}_v)\|^2,
\end{aligned}
\end{equation}
which is satisfied if
	\begin{equation}\label{EE221}
		\abs{\|\ZZ (\ee_u-\ee_v)\|-\|\ZZtil (\ee_u-\ee_v)\|} \le
		\frac{\beta}{3}\norm{\ZZ (\ee_u-\ee_v)}.
		\end{equation}

The following arguments can account for this. First, if 
$$
\left|\|\boldsymbol{Z}(\boldsymbol{e}_u-\boldsymbol{e}_v)\|^2 - \|\tilde{\boldsymbol{Z}}(\boldsymbol{e}_u-\boldsymbol{e}_v)\|^2\right| \leq \left(\frac{2\beta}{3} + \frac{\beta^2}{9}\right)\|\boldsymbol{Z}(\boldsymbol{e}_u-\boldsymbol{e}_v)\|^2,
$$
then it follows that 
$$
(1 - \beta)\|\boldsymbol{Z}(\boldsymbol{e}_u-\boldsymbol{e}_v)\|^2 \leq \|\tilde{\boldsymbol{Z}}(\boldsymbol{e}_u-\boldsymbol{e}_v)\|^2 \leq (1 + \beta)\|\boldsymbol{Z}(\boldsymbol{e}_u-\boldsymbol{e}_v)\|^2.
$$
Given the hypothesis 
$$
(1 - \beta) \|\boldsymbol{B} \boldsymbol{L}^{2\dagger} (\boldsymbol{e}_u - \boldsymbol{e}_v)\|^2 \leq \|\boldsymbol{Z} (\boldsymbol{e}_u - \boldsymbol{e}_v)\|^2 \leq (1 + \beta) \|\boldsymbol{B} \boldsymbol{L}^{2\dagger} (\boldsymbol{e}_u - \boldsymbol{e}_v)\|^2,
$$
we have
$$
(1 - \beta)^2 \|\boldsymbol{B} \boldsymbol{L}^{2\dagger} (\boldsymbol{e}_u - \boldsymbol{e}_v)\|^2 \leq \|\tilde{\boldsymbol{Z}} (\boldsymbol{e}_u - \boldsymbol{e}_v)\|^2 \leq (1 + \beta)^2 \|\boldsymbol{B} \boldsymbol{L}^{2\dagger} (\boldsymbol{e}_u - \boldsymbol{e}_v)\|^2,
$$
which directly leads to~\eqref{eqimportant1}.

In turn, if~\eqref{EE221} holds true, then
$$
\|\tilde{\boldsymbol{Z}} (\boldsymbol{e}_u - \boldsymbol{e}_v)\| \leq \left(1 + \frac{3}{\beta}\right) \|\boldsymbol{Z} (\boldsymbol{e}_u - \boldsymbol{e}_v)\|.
$$
Thus, we obtain the bound 
$$
\left|\|\boldsymbol{Z} (\boldsymbol{e}_u - \boldsymbol{e}_v)\| + \|\tilde{\boldsymbol{Z}} (\boldsymbol{e}_u - \boldsymbol{e}_v)\|\right| \leq \left(2 + \frac{\beta}{3}\right)\|\boldsymbol{Z} (\boldsymbol{e}_u - \boldsymbol{e}_v)\|,
$$
which directly leads to~\eqref{13}.

We next prove that~\eqref{EE221} holds true. Let $P$ denote a simple path connecting $u$ and $v$. Applying the triangle inequality along $P$, we obtain
\begin{align*}
	&\abs{\norm{\ZZ(\ee_u-\ee_v)} - \norm{\ZZtil(\ee_u-\ee_v)}}
		\leq \norm{(\ZZ - \ZZtil)(\ee_u-\ee_v)} 
		\\
		\leq & 
		\sum\nolimits_{a\sim b \in P}
		\norm{(\ZZ - \ZZtil) \kh{\ee_a - \ee_b}} \\
        \leq& \kh{n\sum\nolimits_{a\sim b \in P}
			\norm{(\ZZ - \ZZtil) \kh{\ee_a - \ee_b}}^2 }^{1/2}\\
		\leq & n^{1/2} \kh{\sum\nolimits_{a\sim b\in E}
			\norm{(\ZZ - \ZZtil) \kh{\ee_a - \ee_b}}^2}^{1/2} \\
		= & n^{1/2} \norm{(\ZZ - \ZZtil) \BB^\top }_F,
\end{align*}
where the third inequality is derived by Cauchy-Schwarz Inequality.

We now transform the above obtained Frobenius norm $n^{1/2} \| (\ZZ - \ZZtil) \BB^\top\| _F$ into the $\LL$-norm as
		\begin{align*}
		&n^{1/2} \| (\ZZ - \ZZtil) \BB^\top\| _F 
		=n^{1/2} \sqrt{\trace{(\ZZ - \ZZtil) \BB^\top \BB (\ZZ - \ZZtil)^\top}} \\
		= & n^{1/2} \sqrt{\trace{(\ZZ - \ZZtil) \LL (\ZZ - \ZZtil)^\top}} \\
		= & n^{1/2} \sqrt{\sum\nolimits_{i=1}^m
 (\zz_i - \tilde{\zz}_i)^\top \LL (\zz_i - \tilde{\zz}_i)} \\
		\leq &
		n^{1/2} \delta'
		\|\QQ \LL^{\dagger} \BB^\top\|_F,
		\end{align*}
where the inequality is obtained according to~\eqref{laplacain eq} and $\delta'$ is given by~\eqref{delta'}. We continue to simplify $n^{1/2} \delta'
		\|\QQ \LL^{\dagger} \BB^\top\|_F$ as
\begin{align*}
		&n^{1/2} \delta'
		\|\QQ \LL^{\dagger} \BB^\top\|_F = n^{1/2}  \delta'
		\kh{\sum\nolimits_{a\sim b \in E} 
			\norm{\QQ \LL^{\dagger} (\ee_a - \ee_b)}^2}^{1/2}
		\\
		\leq & n^{1/2}  \delta'
		\kh{(1 + \beta)
			\sum\nolimits_{a\sim b \in E}  R_{a,b}  }^{1/2}
		\\
		 =&
             n^{1/2}  \delta'
		\kh{
			(1 + \beta)(n-1)}^{1/2}.
\end{align*}

 On the other hand, we provide a lower bound of $\norm{\ZZ (\ee_u-\ee_v)}^2$ as
 \begin{align*}
\norm{\ZZ (\ee_u-\ee_v}^2
	\geq (1-\beta)(\ee_u-\ee_v)^{\top} \LL^{3\dagger} (\ee_u-\ee_v).
\end{align*}
The eigenvalues of the Laplacian matrix $\LL$ of $G$ are at most $n$, so the non-zero eigenvalues of $\LL^\dagger$ are at least $\frac{1}{n}$ and the non-zero eigenvalues of $\LL^{3\dagger}$ areat least $\frac{1}{n^3}$. Moreover, the vector $\ee_u-\ee_v$ is orthogonal to $\ker \LL$, since $\ker \LL$ is spanned by the all-ones vector. By the Courant-Fischer theorem, which states that for a symmetric matrix $\AA$ with minimal non-zero eigenvalue $\lambda_{\min}$, and for any non-zero vector $\xx \perp \ker \AA$, we have $\lambda_{\min} \leq \frac{\xx^T\AA\xx}{\xx^T\xx}$. Thus, we obtain
$$
(\ee_u-\ee_v)^{\top} \LL^{3\dagger} (\ee_u-\ee_v) \geq \frac{2}{n^3}.
$$
Therefore,  
\begin{align*}
\norm{\ZZ (\ee_u-\ee_v}^2 \geq (1-\beta)\frac{2}{n^3}.
\end{align*}Combining the above-obtained results, it follows that
\begin{small}
    \begin{align*}
    &\quad \frac{
        \abs{ \norm{\ZZ (\ee_u-\ee_v)} -  \norm{\ZZtil(\ee_u-\ee_v)}}
    }{
        \norm{\ZZ (\ee_u-\ee_v)}
    }\\
    &\le
    \delta' n^2((1 + \beta)(n-1))^{1/2}\left(\frac{1
        }{2(1-\beta)}\right)^{1/2}\le \frac{\beta}{3}.
    \end{align*}
\end{small}
This completes the proof.
\end{proof}

\subsection{Proof of Theorem~\ref{performance}}
\begin{proof} We begin by analyzing the time complexity of the algorithm. First, a random matrix \(\boldsymbol{Q}\) is generated, which takes \(O(m \log n / \epsilon^{2})\) time (Line 4). Next, we compute \(\QQ \BB\), which requires \(2m \times 24 \log n \epsilon^{-2} = \tilde{O}(m \epsilon^{-2})\) time, as \(\BB\) contains \(2m\) entries (Lines 5-6). The algorithm proceeds through \(k\) iterative rounds (Lines 7-16), with each round selecting a single edge. During each iteration, the Laplacian solver is applied \(2t\) times, resulting in a total cost of \(2 \times 24 \log n \epsilon^{-2} \times m = \tilde{O}(m \epsilon^{-2})\). Consequently, the points \(\{p_1, p_2, \ldots, p_n\}\) are computed in \(\tilde{O}(m \epsilon^{-2})\) time (Lines 8-12). 

Subsequently, the convex hull of the point set \(P\) is computed using \textsc{ApproxConv}, with a time complexity of \(O(n \log n \epsilon^{-2} l)\) (Line 13). The maximum distance between points on the convex hull is then determined in \(O(l^2 \log n \epsilon^{-2})\) time (Line 14). Finally, the solution, the graph, and the Laplacian matrix are updated based on the identified results (Lines 15-16). Thus, the overall time complexity of the \textsc{ApproxFast} algorithm is \(\tilde{O}(k(nl + m)/\epsilon^2)\).\end{proof}

\subsection{Experimental Details}\label{appendix:expdetails}
\subsubsection{Datasets Used in Experiments}
Table~\ref{tab:1} provides the details of each network.

\begin{table}[H]
        \centering
 \setlength{\tabcolsep}{1.2mm} 
 \caption{Datasets used in experiments.}\label{tab:1}
 \begin{tabular}{ccccccc}
  \hline
            Type &Networks &$n$ &$m$   \\ 
            \hline
            \multirow{4}{*}{\parbox[t]{14mm}{\centering Small Network}}
            &\textsf{Tribes} & 16 & 58  \\
            &\textsf{Karate}& 34 & 78   \\
            &\textsf{Dolphins}  & 62 & 159 \\
            &\textsf{Moreno} & 64 &243 \\
            \hline
            \multirow{4}{*}{\parbox[t]{14mm}{\centering Medium Network}}
            &\textsf{Hamster}& 2,000 & 16,097 \\
            &\textsf{Facebook} & 4,039 & 88,234  \\
            &\textsf{Grqc} & 4,158 & 13,422   \\
            &\textsf{Hepth} & 8,638 & 24,806  \\
            \hline
         \multirow{2}{*}{\parbox[t]{14mm}{\centering Large Network}}
            &\textsf{YoutubeSnap} &1,134,890 &2,987,624 \\
            &\textsf{Livejournal} &4,033,137 &27,933,062  \\
            \hline
 \end{tabular}
\end{table}

{

\subsubsection{Parameter Sensitivity Analysis}
We further examine the influence of varying error parameter $\epsilon$ on the accuracy and efficiency of \textsc{ApproxFast}. Specifically, we conduct a series of experiments across several real-world networks, adjusting $\epsilon$ within the range of 0.2 to 0.5. According to the results in Table~\ref{parameter}, the running time of \textsc{ApproxFast} increases approximately in proportion to $\epsilon^{-2}$, which empirically supports its theoretical time complexity of $\tilde{O}(k(nl+m)/\epsilon^2)$. In addition, Table~\ref{parameter} reveals that the results returned by \textsc{ApproxFast} remain similar as $\epsilon$ changes.}

\begin{table}[H]
	\centering
	\caption{The running time (seconds, $s$) and $B(G)/n$ returned by \textsc{ApproxFast} when $k=10$ on real-world networks.}\label{parameter}
	\fontsize{6.8}{7}\selectfont			
\begin{tabular}{m{0.7cm}m{0.5cm}<{\centering}m{0.4cm}<{\centering}m{0.4cm}<{\centering}m{0.5cm}<{\centering}m{0.6cm}<{\centering}m{0.6cm}<{\centering}m{0.6cm}<{\centering}m{0.6cm}<{\centering}}
\toprule
\multirow{2}{*}{Networks} 
& \multicolumn{4}{c}{Running time $(s)$} 
& \multicolumn{4}{c}{$B(G)/n$} \\
\cmidrule(lr){2-5} \cmidrule(lr){6-9}
& $\epsilon{=}0.2$ & $\epsilon{=}0.3$ & $\epsilon{=}0.4$ & $\epsilon{=}0.5$
& $\epsilon{=}0.2$ & $\epsilon{=}0.3$ & $\epsilon{=}0.4$ & $\epsilon{=}0.5$ \\
\midrule
\textsf{Hamster} & 40 & 16 & 10 &6 &588 &592 &594 &604 \\
\textsf{Facebook} & 91 & 42 & 24 &16 &2178 &2198 &2215 &2245 \\
\textsf{Grqc} & 71 &32 &19 &12 &9900 &9988 &10011 &10049 \\
\textsf{Hepth} & 80 &35 &23 &14 &15452 &15497 &15532 &15582 \\
\makecell[l]{\textsf{Youtube}\\\textsf{Snap}} & 32863 &14036 &8465 &5418 &4470080 &4471300 &4472020 &4473103\\[-1mm]
\bottomrule
\end{tabular}
\end{table}

\end{document}